\DeclareMathOperator*{\argmax}{arg\,max}
\renewcommand\cite[1]{\citep{#1}}
\newcommand{\Comments}{1}
\newcommand{\mynote}[3]{\ifnum\Comments=1\textcolor{#1}{#2: #3}\fi}
\newcommand\blfootnote[1]{%
	\begingroup
	\renewcommand\thefootnote{}\footnote{#1}%
	\addtocounter{footnote}{-1}%
	\endgroup
}
\newcommand\CO{\ensuremath{CO}}
\newcommand\CQ{\ensuremath{CQ}}
\newtheorem{proposition}{Proposition}
\begin{document}

\title{Optimizing Library Usage and Browser Experience:\\ Application to the New York Public Library}

\author{
\begin{minipage}{0.45\textwidth}
    \centering
    Zhi Liu\\
    Cornell Tech\\
    \texttt{zl724@cornell.edu}
\end{minipage}
\hfill
\begin{minipage}{0.45\textwidth}
    \centering
    Wenchang Zhu\\
    Cornell Tech\\
    \texttt{wz368@cornell.edu}
\end{minipage}
\vspace{1cm}
\\
\begin{minipage}{0.45\textwidth}
    \centering
    Sarah Rankin\\
    The New York Public Library\\
    \texttt{sarahrankin@nypl.org}
\end{minipage}
\hfill
\begin{minipage}{0.45\textwidth}
    \centering
    Nikhil Garg\\
    Cornell Tech\\
    \texttt{ngarg@cornell.edu}
\end{minipage}
}

\date{}

\maketitle

\begin{abstract}
    We tackle the challenge brought to urban library systems by the {holds system} -- which allows users to request books available at other branches to be transferred for local pickup. The holds system increases usage of the entire collection, at the expense of an in-person browser's experience at the source branch. We study the optimization of usage and browser experience, where the library has two levers: where a book should come from when a hold request is placed, and how many book copies at each branch should be available through the holds system versus reserved for browsers. We first show that the problem of maximizing usage can be viewed through the lens of revenue management, for which near-optimal fulfillment policies exist. We then develop a simulation framework that further optimizes for browser experience, through book reservations. We empirically apply our methods to data from the New York Public Library to design implementable policies. We find that though a substantial trade-off exists between these two desiderata, a balanced policy can improve browser experience over the historical policy without significantly sacrificing usage. Because browser usage is more prevalent among branches in low-income areas, this policy further increases system-wide equity: notably, for branches in the 25\% lowest-income neighborhoods, it improves both usage and browser experience by about 15\%.
\end{abstract}

\blfootnote{
			We thank colleagues at the New York Public Library for their constructive feedback. We also benefited from discussions with Gabriel Agostini, Sidhika Balachandar, Erica Chiang, Evan Dong, Sophie Greenwood, Kenny Peng, and Divya Shanmugam. 
}

\section{Introduction}


Modern public library systems are a dynamic network consisting of multiple {branches}. Each branch has a unique collection (and high degrees of independence to curate it), engages with the surrounding community, and serves as a neighborhood learning hub. 
However, library branches are also {interconnected} and together compose a \textit{system} in which books can flow between branches. As a prominent example, the New York Public Library (NYPL) hosts
books among 92 branches across the boroughs of Manhattan, the Bronx, and Staten Island,\footnote{The boroughs of Queens and Brooklyn have separate systems.} serving over 800 thousand active \textit{patrons} (users) in 2023 \citep{nypl2023}. 

%

One way through which branches are connected is the \textit{holds} system: an online portal allowing patrons to request a book available at any branch in the system; a copy of the book is then transferred to their local branch to be picked up. Holds systems increase effective system-wide \textit{usage} of books and thus {efficiency}: books desired in one branch are not sitting unused in another. However, this increase in efficiency comes at a cost. A recent study by \citet{liu2024identifying} found one pitfall of the holds system: as shown in \Cref{fig:intro_disparities}, patrons in some neighborhoods use the holds system more frequently, and as a result, books from branches in other neighborhoods are depleted to satisfy these hold requests.  More generally, by removing book copies from shelves, the holds system worsens the \textit{browsing} experience---books that are not checked out serve patrons who browse the collection and read books inside the branch; serving these patrons cannot be overlooked as libraries are a public resource. Historically, the effect on bookshelf collection quality due to holds requests varied throughout the system. Furthermore, as usage of the holds system is higher in higher-income neighborhoods, branches in lower-income neighborhoods are disproportionately depleted, raising equity concerns.

In this paper, we ask: \textit{can we retain the efficiency benefits of the holds system while mitigating its effects on the browser experience}, by modifying library operations? We consider the optimization of two policy levers: (1) inventory reserve: What fraction of copies for each book in each branch's collection should be made available to the holds system versus \textit{reserved} for in-person browsers at that branch; and (2) hold requests fulfillment: When a hold request is placed, where should the copy of book come from, provided that it's available in multiple branches.

By studying these levers, we connect the problem of managing the holds system to a broader class of problems on resource allocation in a networked setting, in the revenue management literature. Books in a library branch can be considered as a flexible resource in a network, where it can be used to satisfy patrons both at its home branch, and those who use the holds system at other branches. However, one unique aspect of our problem that differs from other applications to which revenue management methods are applied is that books sitting on the shelf have direct value to browsers, unlike products sitting in a warehouse. Intuitively, then, a book copy should be pulled from a branch which has multiple copies of the book, and for which the demand from browsers is low. 
These levers are further implementable by the library system. Regarding inventory reserves: the NYPL has been experimenting with small collections reserved for browser usage in select branches, as a temporary measure to ensure browsers get access to some of the most popular books, and it is practically feasible to scale up such practice. Regarding hold requests fulfillment: we show that our Pareto-optimal policies can be implemented by translating them into parameters to a tier-based paging method that is currently in use, without substantial performance degradation.

\begin{figure}[tb]
    \centering
    \begin{subfigure}[b]{0.325\textwidth}
        \centering
        \includegraphics[width=\textwidth]{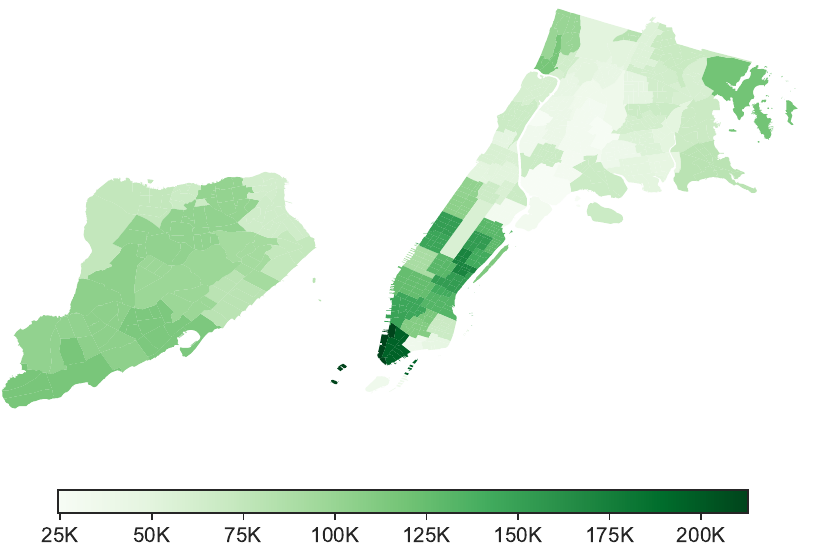}
        \caption{Median income}
        \label{fig:1a}
    \end{subfigure}
    \begin{subfigure}[b]{0.325\textwidth}
        \centering
        \includegraphics[width=\textwidth]{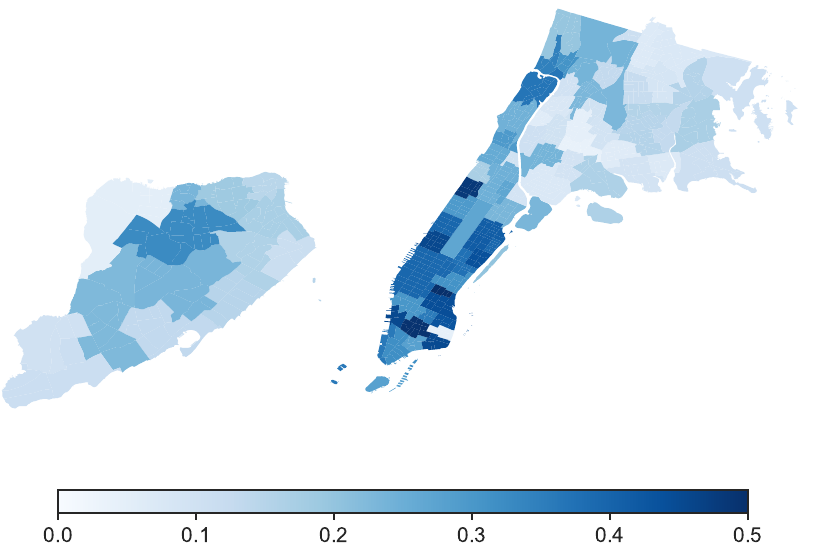}
        \caption{Hold usage fraction}
        \label{fig:1b}
    \end{subfigure}
    \begin{subfigure}[b]{0.325\textwidth}
        \centering
        \includegraphics[width=\textwidth]{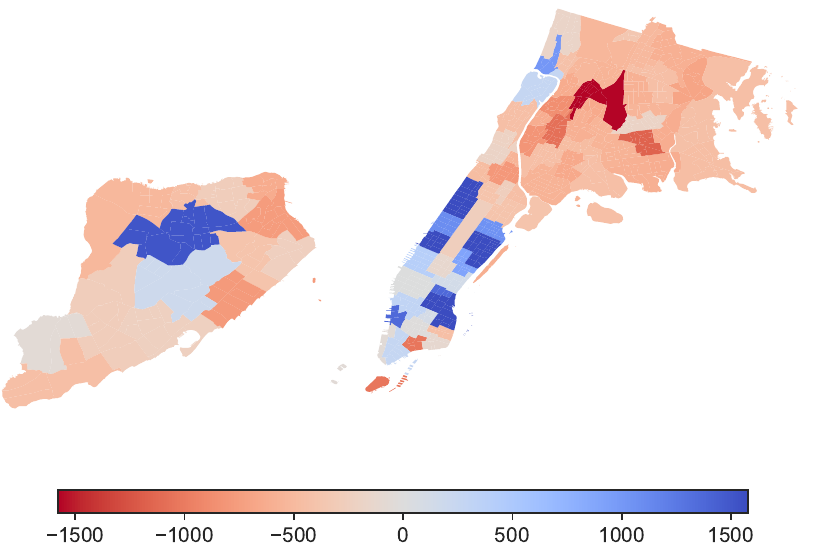}
        \caption{Net book desirability inflow}
        \label{fig:1c}
    \end{subfigure}
    \caption{(a) The median income of patrons served by each library branch differs substantially across the boroughs of Manhattan (center), the Bronx (upper right), and Staten Island (left). (b) Substantial differences in patrons' tendency to use the holds system among branches, correlated with neighborhood income levels, as shown in \citet{liu2024identifying}. For example, branches in Manhattan tend to serve higher-income patrons, who also use the holds system more, compared to branches in the Bronx. (c) Hold usage disproportionately pulls books from low-hold-usage and lower-income branches and thus worsens browser experience there. This effect is measured through the \textit{net desirability inflow} of each branch, calculated as the difference between the number of books taken from other branches to fulfill holds at this branch and the number of books taken from this branch to fulfill holds at all other branches, weighted by the desirability of each book title. Book desirability, as measured by \citet{liu2024identifying}, is a score between 0 and 1 assigned to each book title, with higher scores indicating a higher probability of getting checked out conditional on availability. We note that, once a book is returned by the holds user, it is returned to the source library. 
    }
    \label{fig:intro_disparities}
\end{figure}

We take the following steps towards more efficient and more equitable operations of the holds system: (a) we first formulate and analyze the problem of maximizing usage, and show that it can be cast through the lens of revenue management problems with flexible products, for which a near-optimal method for hold request fulfillment exists; (b) we then extend the model to capture browser experience, and provide a framework for finding implementable policies through a simulation-optimization approach; (c) we empirically derive optimal policies on the Pareto frontier of usage-experience trade-off using a simulator calibrated to data from the NYPL, and show that a more balanced policy possess desirable properties for practitioners.

\paragraph{A stylized model for maximizing usage} 
A book circulating within the library system can fulfill hold requests from {any} branch in the library system network, as well as serve local patrons. On the other hand, a patron's hold request may be fulfilled flexibly: either by book in the local branch, or any other branch in the library system network. We draw a connection to problems studied in the revenue management literature, where a resource that can be used to satisfy different product requests needs to be allocated to maximize some reward. We start by formulating this problem as a dynamic program of maximizing usage using a holds fulfillment method, under a set of initial inventory levels. We study a linear programming (LP) relaxation of this dynamic program, and show that there exists a $\frac12$-optimal fulfillment method for maximizing usage, based on a value function approximation of the LP. Such an approach is often used for operational challenges arising in e-commerce and airlines, but has not been applied to study the operation of public interest systems, to the best of our knowledge. 

\paragraph{Simulation optimization framework} However, taking into account browser experience within the revenue management approach seems intractable. Thus, we develop a simulation-optimization framework that finds a balance between usage and browser experience, by optimizing the inventory reserved for browsers at each branch. The simulation takes as inputs parameters estimated from and calibrated to historical checkout data, which are used to generate checkouts.

Then, we derive and evaluate different policies, defined as the combination of hold fulfillment method and browser reserves. First, we fix the hold fulfillment to the 1/2-optimal policy calculated above. We then search over inventory reserve fractions, using the simulator: we use simulation to measure each policy's impact on desired usage and browser experience metrics. Then we optimize for the inventory reserve of each branch, through a Bayesian optimization procedure. This Bayesian optimization approach explores the Pareto frontier between overall usage and browser experience objectives. Finally, we take policies found on the usage-browser experience Pareto frontier, and \textit{simplify} them into implementable policies, in which hold fulfillment is based on a static tier list of branches, shared across books, compatible with current practice. We then simulate the performance of these implementable counterparts. This simulation optimization framework allows us to measure fine-grained objective functions, and has broader applicability in other domains, where the diversity and quality of the inventory itself are important.


\paragraph{Empirical characterization of optimal policies} We empirically apply our approach to data from the New York Public Library to design policies that better balances usage and browser experience. We choose objectives based on the maximum achievable usage and browser experience. We find that (1) the trade-off between efficiency (overall usage) and browser experience is significant: the policy that maximizes browser experience trades 40\% of efficiency loss for 17\% of browser experience gain compared with the most efficient policy; (2) despite this trade-off, there exist policies that Pareto dominate historical practices: these policies induce more usage and better browser experience overall; (3) a more balanced policy also has more nuanced implications on individual branches, such as reducing disparities in net desirability inflow of books, and improving availability of most popular books---compared to the historical policy, this policy especially increases usage and the browser experience in branches in low-income neighborhoods, at little cost to usage in other branches. We have begun discussions about the feasibility of putting these policy innovations into practice, through our collaboration with the New York Public Library.

\subsection*{Related work}
Our work connects to the broad literature on public interest operations, algorithmic fairness and efficiency-equity tradeoffs, and network revenue management---across communities in economics, computer science, and operations. 

\paragraph{Public-interest operations and efficiency-equity tradeoffs in practice} 
Our work broadly connects to the large, growing literature on the equity-efficiency trade-off in public-interest organizations and allocation of public resources and goods \cite{monachou2022fairness}, both theoretically and in practice, particularly from an operational perspective: e.g., allocation of medical supplies \citep{manshadi2021fair}, food donations \citep{sinclair2022sequential, banerjee2023online}, social services \citep{koenecke2023popular,azizi2018designing}, education programs \citep{Meier2023Effectiveness}, transportation \citep{maheshwari2024congestion, torrico2024equitable, ostrovsky2024effective,jalota2021efficiency,delarue2024algorithmic,banerjee2019incorporating,bertsimas2020bus}, response to non-emergency incidents \citep{liu2024redesigning}, multi-sided recommendations and ranking \citep{chen2023interpolating,greenwood2024useritem,patroranking,manshadi2023redesigning}, and theoretical characterizations of efficiency-equity trade-offs more broadly \citep{bertsimas2011price,bertsimas2012efficiency,liang2022algorithmic}.  Within this literature, the most related is the work of \citet{liu2024redesigning} who, in the context of allocation policies for maintenance resources in New York City, also empirically explore an equity-efficiency Pareto frontier using a simulation-optimization approach. 

A common question of substantial interest throughout this literature is: \textit{when are tradeoffs large, both theoretically as a function of the structure of the problem, and empirically in practice in a concrete setting?}. To this literature, we contribute: (a) a novel application within which to study this broad question, in the study of library system operations, through hold fulfillment and browser reserve policies; (b) relatedly, a novel methodological domain (network revenue management with flexible products, as discussed below), from which policies are derived and optimized; (c) an empirical answer to this question in our real-world application, showing that the tradeoff can be severe but that we can find \textit{implementable} policies that achieve substantial Pareto improvements over the historical policy.




\paragraph{Library operations} The application of our work -- urban public library resource allocation -- relates to a growing literature that looks into the operations of library system through quantitative measures. Existing research usually focuses on quantifying high-level statistics, such as large-scale disparities in funding \citep{sin2011neighborhood} or accessibility \citep{cheng2021assessing, guo2017spatial} across an entire country or region, or comparison between rural and urban libraries \citep{real2014rural}. More fine-grained studies are also emerging: \citet{del2021urban} studies urban library system in Medellin, Colombia as a network using a data envelopment analysis (DEA), and estimate various impacts on efficiency; \citet{liu2024identifying} study the New York Public Library using a Bayesian latent variable model, and estimate key parameters contributing to efficient usage, with an emphasis on the holds system. Both papers uncover socio-economic inequities, with branch usage correlated or affected by the socio-economic status of the neighborhood. These works are essential for understanding the challenges modern libraries are facing but are not yet sufficient in providing guidance for practitioners to solve these challenges. We empirically use the parameter estimates of \citet{liu2024identifying}, and further calibrate a simulator to evaluate and optimize our policies. 

We directly tackle the challenge of balancing usage and browser experience in urban libraries. To this line of work, we contribute a computational framework for deriving feasible and implementable policies of the holds system, an important, widely used component of library operations. Our usage objective aligns with the DEA method used by many prior works in this field (e.g., \citet{shim2003applying, reichmann2004measuring, del2021urban}), where the efficiency of an entire system is evaluated using weighted aggregation of inputs and outputs of individual system units. Our browser experience objective is proposed through extensive conversation with library practitioners, and is novel to the field. 

\paragraph{Revenue management and simulation optimization} Library systems consist of a large network of branches that provide similar `products' (books) to `customers' (patrons) with different needs. This setting relates to the broader class of problems known as revenue management (see, e.g., \citet{talluri1998analysis,besbes2012blind, ferreira2018online, ma2020approximation}), which includes, for example, managing customer bookings in a network of flights. In the holds system, a hold request coming into a branch can be fulfilled by books at the local branch or at any other branch. Such property is referred to as `flexible' in the literature \citep{gallego2004revenue, gallego2004managing}. Our modeling results apply the technical tools developed by \citet{zhu2024performance}, which considers a network of resources with limited capacity, and requests for a flexible product may be satisfied by multiple combinations of resources. They provide a policy that achieves at least $1/(1+L)$ of the optimal expected revenue, where $L$ is the maximum amount of resources each product request consumes. In our setting, $L=1$ as each arriving patron requests at most one copy of the book, and, consistent with \citet{zhu2024performance}, we derive a holds fulfillment method with an approximation ratio of $\frac12$ when optimizing usage (efficiency) alone. 

Simply applying tools developed in this line of work is insufficient for our goal, as we are also optimizing for browser experience, which is akin to the quality of the inventory at various locations \textit{itself}, as opposed to that utilized by customers. There has been attempts at this challenge, e.g., \citet{van2002variety} study the variety of assortment offered to customers, which alludes to the variety of the inventory available. However, to the best of our knowledge, directly optimizing for the inventory quality over time by incorporating it as either an objective or a constraint still remains an open problem.

Our approach augments theoretical modeling with empirical methods---we incorporate simulation-optimization methods, on a calibrated simulator, to measure objectives and optimize policies that are otherwise hard to model theoretically. However, the theoretical development is crucial: finding a $\frac12$-approximation fulfillment policy substantially reduces the parameter space over which the simulation-optimization must search. For simulation-optimization in our empirical application, we use recent advances in Bayesian optimization \citep{balandat2020botorch, daulton2021parallel}. 

Finally, our work connects to a large literature in \textit{urban bike sharing} operations, which asks: how should stations be distributed spatially \cite{he2021customer} and how should the operator rebalance bikes between stations (e.g., overnight \citep{jian2016simulation} or during the day \citep{freund2020data}), or incentivize users to do so \citep{Singla_Santoni_Bartók_Mukerji_Meenen_Krause_2015,chung2018bike}? While the applications share commonalities in managing spatially distributed resources that flow throughout the network over time, there are substantial differences: in our library application, resources (books) are highly heterogeneous, the operational lever includes choosing the \textit{source} location of the provided resource (as opposed to rebalancing), and the objective includes improving an in-person browser's experience. (We note that inventory \textit{rebalancing} at scale is not currently a design decision at the New York Public Library; when a user returns a book, it is sent back to the source library).

\section{Model and methods}


In this section, we introduce our model and main technical tools. Our goal is to jointly optimize (1) usage and (2) browser experience, using a combination of (a) hold requests fulfillment and (b) inventory reserve for browsers. 

In \Cref{sec:model_usage}, we introduce a model of library operations, where patrons arrive over time, and copies of the same book at different branches may be used to fulfill the same hold request. We first focus on optimizing usage with hold request fulfillment method, in which the source branch to fulfill each hold request is chosen. We introduce several fulfillment methods for maximizing usage in \Cref{sec:policy}, each being less optimal but more tractable. In \Cref{sec:browserreserve}, we then discuss the hardness of improving browser experience with inventory reserves through analytical tools and motivate the use of simulation-based methods.






\subsection{A model of library usage}
\label{sec:model_usage}

Consider a library system consisting of a set $B$ of branches, indexed by $i$. For brevity, in the model presentation we consider one title of books, multiple copies of which are available in multiple branches. (The optimization problem we will introduce is separable by book titles, and in the empirical application, we will optimize policies applied to thousands of titles simultaneously). We first present a model to characterize usage (checkouts) of books, by either holds or in-person patrons. In the next section, we develop a fulfillment policy to maximize usage under this model.

 Of the copies that are available at each branch $i \in B$, they are further separated into two types: (1) copies that are reserved for browser usage at that branch, and (2) copies that can be used by both browsers at that branch, \textit{and} can be used to fulfill holds at any branch. We use $a \in A$ to denote the type for each copy, where $A = \{(i,u): i\in B, u\in \{\text{reserve, non-reserve}\}\}$. Let $\textbf{c} = \{c_a, a\in A\}$ denote the vector of starting inventory of each type of copies.

Library patrons arrive over some time, and patron belong to different types, that distinguish the location that they enter the library system (their home branch), and their modes of checkout (browser or holds user). Concretely, we let patrons be indexed by types $j\in P$, where $P = \{(i,u): i\in B, u\in \{\text{browse, hold}\}\}$.\footnote{Naively, a person who both browses in person and uses the holds system can be represented as two patrons, and analogously for those who frequent multiple branches.}


We let $A_j \subset A$ be the subset of types of copies that can be used to satisfy the request of patron type $j$. For example, if $j = (i, \text{browse})$, then $A_j = \{(i, \text{reserve}), (i,\text{non-reserve})\}$: both the copies reserved and not reserved for browsers at branch $i$ could be used to satisfy this browser. On the other hand, if $j = (i, \text{hold})$, then $A_j = \{(i,\text{non-reserve}) \text{ for } i \in B\}$: we can only use copies that are not reserved for browsers, but these copies can be drawn from any branch in the system.

This stylized model of library usage captures the essence of the holds system: a network of branches where one type of patron demand can be satisfied by multiple types of copies in the inventory. In the next section, we focus on optimizing usage using a holds fulfillment method, given a set of initial inventory levels.




\subsection{Maximize usage via fulfillment}
\label{sec:policy}

In this section, we focus on maximizing usage through \textit{holds fulfillment}: when a request comes in and the book is available at multiple branches, from what branch will the request be served? We consider (1) a dynamic program-based fulfillment that optimizes a reward function to maximize overall checkouts, combining both browser and holds checkouts, in \Cref{sec:dppolicy}; (2) a near-optimal fulfillment method that achieves a $\frac12$ (worst case) approximation of the above dynamic program, in \Cref{sec:nearoptimal}; (3) a practically implementable, \textit{static} tiered fulfillment method that further approximates the near-optimal fulfillment method by sorting branches into one of three tiers, in \Cref{sec:tiered_fulfillment_sec2}.


Suppose each checkout from patrons of type $j$ yields a reward of $r_j\ge 0$.\footnote{Practitioners can use $r_j$ to reflect how much they value usage by different types of patrons. For example, $r_j=1$ for all $j$ would yield an optimization problem that maximizes an unweighted number of total checkouts. In our empirical application, we use non-homogeneous $r_j$ best reflects our metric of efficiency. This is discussed in detail in \Cref{sec:nearoptimalusagemaximizingsimulation}.}. We consider patron arrivals over a fixed time period $t\in \{1,2,\dots,T\}$, and during each period, with probability $\lambda_j\in [0,1]$, the arriving patron is of type $j\in P$.


\subsubsection{Usage-optimal dynamic program fulfillment}  
\label{sec:dppolicy}

Let $V_t(\textbf{x})$ denote the expected reward starting from period $t$ with some inventory level $\textbf{x} \le \textbf{c}$, and let $\textbf{e}_a$ denote a vector with the same length as $\textbf{c}$, with value $1$ at position $a$ and $0$ otherwise. The following dynamic program (DP) reflects the problem of maximizing our expected reward, considering just system usage:
\begin{align*}
   V_t(\textbf{x}) & = \sum_{j\in P} \underbrace{\left[\lambda_{j} \max_{a\in A_j: x_a > 0}\Big(r_j + V_{t+1}(\textbf{x} - \textbf{e}_a), V_{t+1}(\textbf{x})\Big)\right]}_{\text{Type $j$ patron arrives, use one copy and collect reward, or reject}} + \underbrace{\left(1-\sum_{j\in P} \lambda_j \mathbf{1}\left(\sum_{a\in A_j}x_a >0\right)\right) V_{t+1}(\textbf{x})}_{\text{No patron arrives or no available inventory}}\\
   & = \sum_{j\in P} \left[\mathbf{1}\left(\sum_{a\in A_j} x_a > 0\right)\left(r_j + \max_{a\in A_j: x_a >0} V_{t+1}(\textbf{x} - \textbf{e}_a) - V_{t+1}(\textbf{x})\right)^+ + V_{t+1}(\textbf{x})\right],\quad  \forall t\in [T], \textbf{x}\ge 0,
\end{align*}
with boundary conditions $V_{T+1}(\textbf{x}) = 0, \forall \textbf{x}$ and $V_{t}(\textbf{0}) = 0, \forall t$. The first part of the DP maximizes the expected reward from a potential patron's arrival---it makes the decision of whether to serve the patron with some type $a$ copy (and deduct this unit from the inventory)---while the second part carries the reward from next period in the case that no patron arrives or no inventory is available to serve the arriving patron. This DP captures the library's hold request fulfillment decisions in an online fashion, given the anticipation of patron arrivals (which can be estimated through historical data) and current inventory levels $\textbf{x}$. Solving this dynamic program would yield the optimal fulfillment method that maximizes usage. However, this involves calculating $V_{t}(\textbf{x})$ for an exponential number of possible states of the inventory $\textbf{x}$ at every time step $t$ (and for every book title in the library), thus rendering it intractable.

\subsubsection{Near-optimal fulfillment method given a value function approximation} 
\label{sec:nearoptimal}

We derive a computationally tractable fulfillment method that achieves a near-optimal (1/2 approximation guarantee relative to the DP) usage objective. This fulfillment method is based on a value function approximation to the linear programming relaxation of the DP. We apply the technical tools developed by \citet{zhu2024performance} to derive a $1/(1+L)$ approximate fulfillment method, where $L$ is the maximum number of resources used by a customer; in our problem, each patron only checks out $L=1$ copy of a book, giving a $\frac12$ approximation ratio.

\paragraph{Value function approximation} To derive such a near-usage-optimal fulfillment method, we first consider a linear program (LP) which provides an upper bound on the expected reward:
\setlength{\arraycolsep}{2pt}
\begin{subequations}
\label{eq:fluid-lp}
\begin{align}
    \text{{LP}} = \max_{\textbf{z}} \quad &\sum_{j\in P} r_j\sum_{a \in A_j} z_{ja}\\
    \text{s.t.} \quad  & \sum_{a\in A_j} z_{ja} \le T \lambda_j, & \hspace{-10em}\forall j \in P, \label{constr:lp1}\\
    & \sum_{j: a\in A_j} z_{ja} \le c_a, & \forall a \in A,\label{constr:inventory}\\
    & z_{ja} \ge 0, &\forall j\in P \text{ and } a\in A_j.
\end{align}
\end{subequations}
Intuitively, each decision variable $z_{ja}$ represents the expected consumption of type $a$ copies by type $j$ patrons. Constraint \ref{constr:lp1} ensures the expected consumption by type $j$ patrons is no larger than their expected arrivals $T\lambda_j$, and constraint \ref{constr:inventory} ensures the expected consumption of type $a$ copies does not exceed the starting inventory $c_a$.
\begin{proposition}
\label{prop:LPUB}
    $LP \ge V_1(\textbf{c})$, where $\textbf{c}$ is the vector of starting inventory levels.
\end{proposition}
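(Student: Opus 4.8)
The plan is to show that the LP provides a valid upper bound on the dynamic program's value $V_1(\textbf{c})$ by exhibiting a feasible solution $\textbf{z}$ to the LP whose objective value equals (or exceeds) $V_1(\textbf{c})$. The natural candidate is the solution induced by the optimal DP policy itself: for each patron type $j$ and copy type $a \in A_j$, I would define $z_{ja}$ to be the \emph{expected} number of times (over the random patron arrivals across periods $1,\dots,T$) that the optimal DP policy serves a type-$j$ patron using a copy of type $a$, where the expectation is taken over the arrival process and any randomization in the policy.

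With this construction, verifying feasibility is the core of the argument. First I would check constraint \eqref{constr:lp1}: the expected number of type-$j$ patrons served (summed over all copy types $a \in A_j$) cannot exceed the expected number of type-$j$ arrivals, which is $\sum_{t=1}^T \lambda_j = T\lambda_j$, since the policy can only serve a patron when one actually arrives. Next I would check constraint \eqref{constr:inventory}: the expected total consumption of type-$a$ copies across all patron types that can use them, $\sum_{j: a \in A_j} z_{ja}$, cannot exceed the starting inventory $c_a$, because the policy never consumes more copies of a given type than were initially stocked (inventory is nonincreasing and bounded below by zero on every sample path, so this holds pathwise and hence in expectation). Nonnegativity is immediate since each $z_{ja}$ is an expectation of a nonnegative count.

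Finally I would argue that the objective value of this feasible $\textbf{z}$ equals $V_1(\textbf{c})$. By linearity of expectation, $\sum_{j \in P} r_j \sum_{a \in A_j} z_{ja}$ is exactly the expected total reward collected by the optimal DP policy starting from inventory $\textbf{c}$ at period $1$, which is precisely $V_1(\textbf{c})$ by definition of the value function. Since $\textbf{z}$ is feasible and $LP$ is a maximization, we conclude $LP \ge V_1(\textbf{c})$.

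The main obstacle, and the step requiring the most care, is the pathwise inventory bound underlying constraint \eqref{constr:inventory}: I must argue that the total consumption of each copy type $a$ is deterministically bounded by $c_a$ on every realization of the arrival sequence, not merely in some averaged sense, so that taking expectations preserves the bound. This follows because the DP only serves a type-$j$ patron with a type-$a$ copy when $x_a > 0$ (the constraint $a \in A_j: x_a > 0$ in the max), so inventory never goes negative; translating this feasibility-per-sample-path into the expected-flow feasibility of the LP is the key conceptual link. A secondary subtlety is ensuring the expectation defining $z_{ja}$ correctly aggregates across time periods and that the arrival probabilities combine to give the clean bound $T\lambda_j$ rather than a looser one.
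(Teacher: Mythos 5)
Your proposal is correct and follows essentially the same route the paper indicates: it converts the optimal DP policy's (expected) consumption counts into a feasible LP solution, verifying the inventory constraint pathwise and the demand constraint in expectation, exactly the standard argument the paper attributes to \citet{gallego1994optimal}. No gaps; the handling of the pathwise inventory bound versus the in-expectation demand bound is the right care to take.
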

\Cref{prop:LPUB} can be proven by showing that every sample path of the dynamic program can be converted into a feasible solution to the LP; this is a standard  approach in the literature, see e.g., \citet{gallego1994optimal}. The advantage of this LP is that it induces an approximate value function for the dynamic program. Consider the following approximation of value function $V_t(\textbf{x})$:

\begin{equation}
    H_t(\textbf{x}) = \sum_{a\in A} \hat{\gamma}_{at} x_a,
    \label{eq:value_func_approx}
\end{equation}
where coefficients $\hat{\gamma}_{at}, \forall a\in A$ are computed through \Cref{alg:one}. Intuitively, $\hat{\gamma}_{at}$ is the `expected reward per copy' we can get from having $x_a$ type $a$ copies, starting from time period $t$. The idea of \Cref{alg:one} is akin to inventory balancing: at each time period $t$ and for each patron type $j$, \Cref{alg:one} identifies the best type of copy to serve this patron type by finding the copy type with minimum $\hat{\gamma}_{a,t+1}$; it then increases this copy type's coefficient in period $t$ (which means that it is less likely to be the best type in period $t-1$). Appendix \Cref{lem:1} establishes that $H_1(\textbf{c})$ is at least $\frac{1}{2}$ of the LP, and thus, $H_1(\textbf{c})\ge \frac{1}{2}V_1(\textbf{c})$ -- $H_1(\textbf{c})$ provides a good approximation to the DP.

\begin{algorithm}[tb]
	\SetAlgoNoLine
	\KwIn{$\lambda_j, \forall j\in P$ and $c_a, \forall a \in A$.}
	\KwOut{Coefficients $\hat{\gamma}_{at}, \forall {a \in A, t \in [T]}$.}
	Initialize: $\hat{\gamma}_{a, T+1} = 0, \forall a\in A$\;
        \For{$t\in \{T, T-1,\dots,1\}$}{
            \For{$j\in P$}{
            Identify best type: $a^*_{jt} = \argmax_{a\in A_j} (r_j-\hat{\gamma}_{a, t+1})^+$\;
            }
            Update: $\hat{\gamma}_{at} = \hat{\gamma}_{a, t+1} + \sum_{j\in P}\frac{\lambda_{j}}{c_i} \mathbf{1}(a = a^*_{jt})(r_j-\hat{\gamma}_{a, t+1})^+$, for all $a\in A_j$\;
            
        }
	\caption{Calculate value function approximation coefficients}

    \label{alg:one}
\end{algorithm}






To present the near-optimal fulfillment method, we define indicator function $U_{ja}^t(\textbf{x}), \forall j\in P, a\in A_j, \textbf{x}\ge 0$ as follows:
\begin{equation}
    U_{ja}^t(\textbf{x}) = 
    \begin{cases}
        1, & \text{ if } r_j \ge H_{t+1}(\textbf{x}) - \max_{a\in A_j, x_a>0} H_{t+1}(\textbf{x} - \textbf{e}_a),\\ & \quad \text{and } a = \arg\max_{a\in A_j, x_a>0} H_{t+1}(\textbf{x} - \textbf{e}_a)\\
        0, & \text{otherwise}.
    \end{cases}
    \label{eq:indicator}
\end{equation}


Our proposed fulfillment method takes the following form:
\begin{quote}
    \textit{At state $\textbf{x}$ and time $t$, when a patron with type $j$ arrives, we provide them a copy of type $a$ if and only if $ U_{ja}^t(\textbf{x}) = 1$.}
\end{quote}
What does this method do? For each arriving patron, we will identify a type of copy that (1) can serve this patron, (2) has positive inventory, and (3) has the lowest benefit being in the inventory, and will fulfill this patron's needs if and only if the expected reward from this copy sitting in the inventory going forward is no greater than the immediate reward of serving this book to this patron.\footnote{In theory, such fulfillment would reject some patrons' demand for books. However, with well designed reward structures, this would not happen in practice. For example, if $r_j=1, \forall j$, the update rule in \Cref{alg:one} ensures that $\hat{\gamma}_{at}\le 1$, that is, the expected number of checkouts from a copy in the inventory is never larger than 1, and thus some copy will always be used to satisfy demand, if there exists an available copy in the system of the appropriate type.} We can then express the expected reward from this fulfillment method as
\begin{equation}
    R_t(\textbf{x}) = \sum_{j\in P} \lambda_j \sum_{a\in A_j} U_{ja}^t(\textbf{x})(r_j + R_{t+1}(\textbf{x} - \textbf{e}_a)) + \left(1-\sum_{j\in P} \lambda_j \sum_{a\in A_j} U_{ja}^t(\textbf{x})\right) R_{t+1}(\textbf{x}), \quad  \forall t \in [T], \textbf{x}\ge 0,
\end{equation}
with boundary conditions $R_{T+1}(\textbf{x}) = 0, \forall \textbf{x}\ge 0$ and $R_{t}(\textbf{0}) = 0, \forall t\in [T]$. The first part of $R_t$ gives us the expected reward in the case that a book is used in period $t$, whereas the second part carries the reward from the next period in the case that no books are used. 

We show that this fulfillment achieves at least $1/2$ of the usage objective of the DP solution.
\begin{restatable}{proposition}{mainthm}
\label{thm:main}
    $R_1(\textbf{c}) \ge \frac{1}{2}V_1(\textbf{c}).$
\end{restatable}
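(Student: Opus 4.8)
The plan is to prove the stronger statement that the policy's reward-to-go dominates the linear value-function approximation everywhere, namely $R_t(\textbf{x}) \ge H_t(\textbf{x})$ for every $t\in[T+1]$ and every feasible inventory $\textbf{0}\le \textbf{x}\le \textbf{c}$, and then specialize to $t=1$, $\textbf{x}=\textbf{c}$. Once $R_1(\textbf{c})\ge H_1(\textbf{c})$ is established, the proposition follows immediately by chaining the facts already available: Appendix \Cref{lem:1} gives $H_1(\textbf{c})\ge \tfrac12 LP$, and \Cref{prop:LPUB} gives $LP \ge V_1(\textbf{c})$, so that $R_1(\textbf{c})\ge H_1(\textbf{c})\ge \tfrac12 LP\ge \tfrac12 V_1(\textbf{c})$. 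Thus essentially all of the work lies in the performance bound $R_1(\textbf{c})\ge H_1(\textbf{c})$, which I would prove by backward induction on $t$.

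First I would set up the induction. The base case $t=T+1$ is immediate, since $R_{T+1}\equiv 0$ and $H_{T+1}\equiv 0$ (as $\hat{\gamma}_{a,T+1}=0$). For the inductive step, observe that the recursion defining $R_t$ can be written as $R_t = \mathcal{T}^\pi R_{t+1}$, where $\mathcal{T}^\pi$ is the one-step operator induced by the fulfillment method: for any function $f$,
\[
(\mathcal{T}^\pi f)(\textbf{x}) = \sum_{j\in P}\lambda_j\sum_{a\in A_j} U_{ja}^t(\textbf{x})\big(r_j + f(\textbf{x}-\textbf{e}_a)\big) + \Big(1-\sum_{j\in P}\lambda_j\sum_{a\in A_j}U_{ja}^t(\textbf{x})\Big) f(\textbf{x}).
\]
Two facts make this operator convenient: (i) for each $j$ at most one $a\in A_j$ has $U_{ja}^t(\textbf{x})=1$, so $\sum_{a\in A_j}U_{ja}^t(\textbf{x})\le 1$ and hence $\sum_{j}\lambda_j\sum_{a}U_{ja}^t(\textbf{x})\le \sum_j\lambda_j\le 1$, which makes every weight in $\mathcal{T}^\pi$ nonnegative; and (ii) consequently $\mathcal{T}^\pi$ is monotone, i.e.\ $f\ge g$ pointwise implies $\mathcal{T}^\pi f\ge \mathcal{T}^\pi g$. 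Applying monotonicity to the induction hypothesis $R_{t+1}\ge H_{t+1}$ yields $R_t = \mathcal{T}^\pi R_{t+1}\ge \mathcal{T}^\pi H_{t+1}$, so it remains to prove the purely deterministic one-step inequality $(\mathcal{T}^\pi H_{t+1})(\textbf{x})\ge H_t(\textbf{x})$.

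This one-step inequality is where the linearity of $H$ pays off. Since $H_{t+1}(\textbf{x}-\textbf{e}_a)=H_{t+1}(\textbf{x})-\hat{\gamma}_{a,t+1}$, the operator applied to $H_{t+1}$ collapses to $H_{t+1}(\textbf{x}) + \sum_{j}\lambda_j\sum_{a\in A_j}U_{ja}^t(\textbf{x})(r_j-\hat{\gamma}_{a,t+1})$, while the update rule of \Cref{alg:one} gives $H_t(\textbf{x})-H_{t+1}(\textbf{x}) = \sum_{j}\lambda_j \frac{x_{a^*_{jt}}}{c_{a^*_{jt}}}(r_j-\hat{\gamma}_{a^*_{jt},t+1})^+$, where $a^*_{jt}=\argmin_{a\in A_j}\hat{\gamma}_{a,t+1}$ is the unconstrained minimizer used by the algorithm. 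Because the policy serves $j$ through the \emph{available} type minimizing $\hat{\gamma}_{\cdot,t+1}$, and exactly when that minimum is at most $r_j$, the per-$j$ contribution on the left equals $\lambda_j\big(r_j-\min_{a\in A_j:\,x_a>0}\hat{\gamma}_{a,t+1}\big)^+$. It therefore suffices to verify, term by term in $j$, that
\[
\Big(r_j-\min_{a\in A_j:\,x_a>0}\hat{\gamma}_{a,t+1}\Big)^+ \;\ge\; \frac{x_{a^*_{jt}}}{c_{a^*_{jt}}}\big(r_j-\hat{\gamma}_{a^*_{jt},t+1}\big)^+.
\]

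I would finish with a short case analysis on whether the unconstrained minimizer $a^*_{jt}$ is actually in stock. If $x_{a^*_{jt}}>0$, the constrained and unconstrained minimizers coincide, both sides carry the same positive part, and the inequality reduces to $x_{a^*_{jt}}/c_{a^*_{jt}}\le 1$, which holds because $\textbf{x}\le\textbf{c}$. If $x_{a^*_{jt}}=0$, the right-hand side vanishes and the inequality is trivial. This closes the induction and hence the proposition. The main obstacle I anticipate is precisely this reconciliation between the policy, which is \emph{availability-aware} (it may serve only a type with $x_a>0$), and \Cref{alg:one}, whose coefficient increments are \emph{availability-oblivious} (they always credit the global minimizer $a^*_{jt}$). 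The case analysis shows the mismatch is benign in the favorable direction: unavailability of the cheapest type only shrinks the target on the right, and the factor $x_a/c_a\le 1$ absorbs the slack when that type is available. This is exactly the structural argument of \citet{zhu2024performance} specialized to $L=1$, with the complementary approximation-quality inequality $H_1(\textbf{c})\ge\tfrac12 LP$ deferred to Appendix \Cref{lem:1}.
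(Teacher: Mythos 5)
Your proposal is correct and follows essentially the same route as the paper's proof: a backward induction establishing $R_t(\textbf{x}) \ge H_t(\textbf{x})$ for all $t$ and $\textbf{x}$, then chaining with \Cref{lem:1} and \Cref{prop:LPUB}. Your monotone-operator framing and the per-$j$ case analysis on whether $x_{a^*_{jt}}>0$ are just a repackaging of the paper's inequality chain, including its key steps of exploiting the linearity of $H$, comparing the availability-constrained minimizer with the algorithm's unconstrained one, and the bound $\mathbf{1}\left(x_{a^*_{jt}}>0\right) \ge x_{a^*_{jt}}/c_{a^*_{jt}}$.
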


The proof first relies on \Cref{lem:1}, which establishes that $H_1(\textbf{x})$ approximates the DP within a factor of $\frac12$, and then proceeds through induction, leveraging both the structure of the indicator function in \Cref{eq:indicator} and \Cref{alg:one}. Finally, we note that, although the $\frac12$ approximation guarantee is the best provable, prior works have shown that in practice, similar algorithms achieve performance much closer to the LP upper bound, and thus the DP (see e.g., Section 5.3 in \citet{ma2020approximation}). This fulfillment gives us a principled, tractable, and provably efficient way to fulfill hold requests. As we will show, it can be further translated into an \textit{offline} fulfillment method that is compatible with NYPL's existing infrastructure and achieves almost identical performance in simulation.

\subsubsection{Practically implementable tiered fulfillment}
\label{sec:tiered_fulfillment_sec2}
The policies introduced above are online, dynamic policies: the source destination for a request can vary at the individual request level, depending on the number of copies of that book available at each branch at the time of the request. In contrast, the New York Public Library currently uses a static, offline policy that is shared across books and branches, in what is known as a `tiered paging list' policy: each branch is assigned 1 out of 3 tiers; whenever a hold request appears at a branch (by a patron who would like to pick up a book -- that is currently available elsewhere in the system -- at a given branch), the system will call branches which have the title available in order -- it will first attempt to fulfill the request with a copy from one of the tier-1 branches, and move down to tier-2 and tier-3 branches respectively, until the request is fulfilled.\footnote{Practitioners point out other nuances that may be involved, such as irregularities in the inventory system, or branch staff unavailable to pull holds physically. These are aspects hard to capture by any analysis.} Thus, we further approximate the above near-optimal policy, by sorting libraries into tiers, informed by the above policy.

Note that parameter $\hat{\gamma}_{at}$ in the value function approximation  $H_t(\textbf{x})$ (in \Cref{eq:value_func_approx}) corresponds to the `expected reward per copy in inventory' for type $a$ books in period $t$ of the policy's planning horizon, and near-optimal fulfillment prioritizes utilizing types that have low values of $\hat{\gamma}_{at}$. To calculate an approximation, for each branch $i$, and for a set of book titles $l\in L$, we calculate the following quantity:
\begin{equation}
\label{eq:implementable_approx}
\frac{\sum_{l\in L} \textbf{1}\left(c_{(i,\text{non-reserve})}^{{l}}>0\right) \hat{\gamma}_{(i, \text{hold}), 1}^{{l}}}{\sum_{{l\in L}} \textbf{1}\left(c_{(i,\text{non-reserve})}^{{l}}>0\right)},
\end{equation}
where $\hat{\gamma}_{(i, \text{hold}), 1}^{l}$ is the expected reward generated by a copy of a given book title at branch $i$, at the start of the planning horizon (t=1); and $c_{(i,\text{non-reserve})}^{{l}}$ is the initial number of copies of book title $l$ at branch $i$ that are not reserved for only browser usage. This measures the `average unit reward' from a copy of any book at branch $i$ (not reserved for browser usage) at the start of the horizon. We then segment all branches into three equal groups, with low (middle, high) average unit reward branches assigned to tier 1 (2, 3), respectively.

After this assignment of branches to static tiers, in our simulation described below, whenever a hold request comes in, we mimic what is done currently: attempt to fulfill this request with a copy from a branch in the first tier, and move down to the second (subsequently, third) tier list only if an available copy could not be found within branches of the previous tier. As we will show in the empirical results,  perhaps surprisingly, each method in this class of implementable tiered fulfillment methods performs close to the near-optimal fulfillment that it is designed to approximate.\footnote{We note that, given different initial non-reserve inventory levels $c_{(i,\text{non-reserve})}^{{l}}$, influenced by the browser reserve fractions discussed in \Cref{sec:browserreserve}, the tier assigned to the same branch could vary across policies; thus, tiered fulfillment corresponds to a \textit{class} of fulfillment methods.}

\subsection{Balance browser experience with browser reserve inventory}
\label{sec:browserreserve}
Above, we show that there is a principled way to maximize overall usage by optimizing holds fulfillment, given a set of initial inventory levels. In this section, we briefly discuss varying inventory reserved for browsers to balance the need for a better browser experience, and why this necessitates simulation optimization methods. A product in a warehouse, or a flight seat serves no utility until they are sold. Diverse and high-quality books on a library shelf, on the other hand, provide value for in-person library patrons, who may read or learn about books while at the branch. A browser experience objective should thus measure browsers' exposure to the quality of books on the library shelf. However, such an objective would be challenging, if not intractable, to incorporate into the above theoretical framework, as it requires taking as an objective the integral of unused inventory through each time period.

$$
\{\beta^*, \hat{\mathcal{F}}^*\}
$$

Our main lever to balance browser experience will be through varying inventory levels \textit{reserved} for browsers. In our model, each branch has its predetermined number of copies of each title in its collection, and the system can choose some of these copies to be set aside so that they are only available for browsers (to checkout or read in-person) at this branch, and not available through the holds system for patrons at other branches. This is equivalent to varying $c_{(i, \text{reserve})}$ and $c_{(i, \text{non-reserve})}$ for each branch $i$ to optimize an overall browser experience objective, subject to $c_{(i, \text{reserve})} + c_{(i, \text{non-reserve})}\le c_{i}$ for some predetermined $c_i$.


Instead of optimizing for specific starting inventory levels for each branch (which would become intractable when optimized for each title), we introduce a parameter $\beta_i\in [0,1]$ for each branch $i\in B$, referred to as ``browser reserve fraction''. This parameter captures, on average, how much of the total inventory of each branch is reserved for browser usage. At the two extremes, $\beta_i=0$ means that no books are reserved at this branch (i.e., they can all be used to fulfill hold requests at other branches; this is (largely) the status quo for NYPL), and $\beta_i=1$ means that all books are reserved for browser reserve, and no patrons from other branches can request books from this branch.

Optimizing such reserves theoretically is challenging, even after we reduce it to optimizing one $\beta_i$ for each branch $i$. To the best of our knowledge, incorporating joint inventory optimization into the problem of revenue management with flexible products has not been studied in the literature. To overcome the hardness in measuring browser experience and optimizing browser reserve inventory, we next introduce our simulation optimization framework.

\section{Simulation optimization framework}
In the previous section, we outline technical tools and levers for optimizing usage and browser experience in a library system. In this section, we introduce our simulation optimization framework, that we will use to optimize the policy levers and evaluate the resulting designs. We start by an overview of the simulator itself, and how we calibrate it to historical data in \Cref{sec:overviewofsimulator}. In \Cref{sec:sim_measure_obj}, we detail how usage and browser experience objectives are measured in the simulation, which overcomes the difficulty of measuring browser experience through stylized modeling. In \Cref{sec:sim_fulfillment_policy}, we discuss how the near-optimal fulfillment method (denoted by $\mathcal{F}^*$) and the implementable tiered fulfillment (denoted by $\hat{\mathcal{F}}^*$) are implemented, given a set of initial inventory levels. Finally, in \Cref{sec:sim_opt}, we outline how the levers are optimized through simulation optimization.

\begin{algorithm}[tb]
\caption{Simulation process}
\label{alg:sim}
	\SetAlgoNoLine
	\KwData{Patron arrival probabilities $\lambda_j, \text{ for all patron types }j$; starting inventories $c_i, \text{ for all branches }i$; number of days to simulate.}
        \KwIn{Policy parameters: browser reserve fractions $\beta_i,$ for all branches $i$; fulfillment method $\mathcal{F}$.}
	\KwOut{Usage objective $f(\vec{\beta}, \mathcal{F})$ and browser experience objective $g(\vec{\beta}, \mathcal{F})$.}
        Initialize inventory: $c_{i, \text{reserve}} \sim \text{Binomial}(c_i, \beta_i), c_{i, \text{non-reserve}} = c_i - c_{i, \text{reserve}}$\;
        \For{each day in simulation}{
            Generate patron arrivals according to $\lambda_j$, and generate book returns\;
            Fulfill patron requests according to $\mathcal{P}$, given current inventory levels\;
        }
\end{algorithm}

\subsection{Overview of the simulator}
\label{sec:overviewofsimulator}

Our simulator first generates patron arrivals, and then simulates checkouts based on a policy $(\vec{\beta}, \mathcal{F})$, which consists of a set of browser reserve fractions $\vec{\beta}$ and a hold fulfillment method $\mathcal{F}$, for a given period of time. At the end, we collect the sample path of checkouts and inventory levels, and calculate usage objective $f(\vec{\beta}, \mathcal{F})$ and browser experience objective $g(\vec{\beta}, \mathcal{F})$. \Cref{alg:sim} provides an overview of this process.

A challenge in enabling simulation realism is generating patron arrivals that mimic reality. One possible solution would be to use historical checkouts and returns as direct input into the simulator. However, this solution does not allow for the flexibility of varying the level of inventory reserved for browsers. Historically, almost all books are open to being used to fulfill hold requests,\footnote{With the exception of a small number of popular books reserved at a few branches, known as the `Lucky Day collections.'} and by varying reserve levels, historical checkouts may become infeasible due to inventory being no longer available; i.e., we would quickly be ``off-policy'' in the simulation. 

Instead, we estimate arrival probabilities from historical data, and then build a simulator based on these arrival probabilities, calibrating the simulator to historical checkouts when using historical inventory reserve levels (i.e., no reserve) and a fulfillment method that best resembles historical practices. The estimation builds off of a Bayesian latent variable model that decomposes arrival rates into three components, each ranging from 0 to 1: (1) $p_i$: the magnitude of demand size at branch $i$, (2) $h_i$: the fraction of checkouts being from the holds system at branch $i$, and (3) a measure of book desirability $d$. We use the parameters estimated by \citet{liu2024identifying}. As an example, for a book with desirability $d$, the arrival probabilities of patrons at branch $i$ that are either browsers or hold users would be
\begin{equation}
    \lambda^d_{(i,\text{hold})} \approx d p_i h_i, \quad \lambda^d_{(i,\text{browser})} \approx d p_i (1-h_i). \footnote{We make further adjustments to these estimates, so the resulting outcome of the simulation under a policy that mimics historical practice could better match historical checkout data.}\label{eq:browserarrival}
\end{equation}
In other words, on each day of the simulation, with probability $\lambda^d_{(i,\text{hold})}$, a hold user comes into branch $i$ and requests to check out a copy, and with probability $\lambda^d_{(i,\text{browser})}$, a browser comes into branch $i$ and attempts to check out a copy. For NYPL, after a book is checked out, there is a 21-day period to read the book. We thus simulate a return after a book has been checked out for 21 days in the simulator. 
\citet{liu2024identifying} show that this method accurately recovers historical checkouts of each book title at each branch. In our empirical application, we further show that our simulation of the historical policy under these parameters recovers the objective function values based on the true historical number of checkouts of each title at each branch.


\subsection{Usage and browser experience objectives}
\label{sec:sim_measure_obj}

One major benefit of a simulation framework is that it allows for measuring any metric based on a sample path of the simulation, including metrics that may be intractable for a stylized model. For a policy, we measure usage based on the (weighted) number of checkouts at each branch and measure browser experience using the (weighted) cumulative sum of the quality of books available at each branch during the entire simulation period. 

\subsubsection{Usage objective} Concretely, for each branch $i\in B$, let $\CO_i(\vec{\beta}, \mathcal{F})$ denote the total checkouts (both browser and hold) at branch $i$, with reserve fractions $\vec{\beta}$ and hold fulfillment method $\mathcal{F}$. 

The policy that (almost)\footnote{Not exactly, given the approximation gap from the dynamic program and differences between the theoretical setup (including fixed time horizon, non-reusability) and the simulation.} maximizes total checkouts over all branches, i.e., $\sum_{i\in B}\CO_i(\vec{\beta}, \mathcal{F})$, would be if no branch has browser reserve inventories ($\vec{\beta} = \textbf{0}$) and we use the near-optimal fulfillment $\mathcal{F}^*$ -- this way, the entire collection of the library is maximally available for checkouts, and we are handling hold fulfillment efficiently. For each input $(\vec{\beta}, \mathcal{F})$, we calculate ratios 
\begin{equation}
    \frac{\CO_i(\vec{\beta}, \mathcal{F})}{\CO_i(\textbf{0}, \mathcal{F}^*)}
    \label{eq:usage_ratios}
\end{equation} 
for each branch $i\in B$. This ratio captures the gap between usage levels under policy $(\vec{\beta}, \mathcal{F})$, compared to the `ideal' usage level under policy $(\textbf{0}, \mathcal{F}^*)$.\footnote{Note that $CO_i(\textbf{0}, \mathcal{F}^*)$ is not, even approximately, an upper bound on the usage level for \textit{each} branch $i$---we would only expect the upper bound to approximately hold for the system as a whole. For example, if we reserve all inventory at branch $i$ for browsers, reserve no inventory at other branches, and use near-optimal fulfillment: then, more browser checkouts and a similar number of hold checkouts at branch $i$ would occur than under the policy $(\textbf{0}, \mathcal{F}^*)$. Empirically, we observe some branches to have ratios greater than $1$ for some policies, but the overall usage objective $f$ does not exceed 1, as expected.}

Finally, we calculate the usage objective for $\vec{\beta}\ge 0, \mathcal{F\in \{{\mathcal{F}}^*,\hat{\mathcal{F}}^*\}}$ as
\begin{equation}
\label{eq:usage_objective}
f(\vec{\beta}, \mathcal{F}) = \frac{\sum_{i\in B} p_i \frac{\CO_i(\vec{\beta}, \mathcal{F})}{\CO_i(\textbf{0}, \mathcal{F}^*)}}{\sum_{i\in B} p_i},
\end{equation} where recall that $p_i$ is a branch demand size parameter. Thus, this measure captures is latent demand size-weighted average usage ratio of each branch.  We use this approach, instead of the sum over checkouts across branches directly, so that no branches are fully ignored in the objective, while still weighing larger branches more heavily.

\subsubsection{Browser experience objective} 


We now define a metric for the quality of a branch's on-shelf collection through time, as a measure of browser experience experience. 

Concretely, for a set of book titles $l\in L$, let $D_i^l(\vec{\beta}, \mathcal{F})$ denote the number of days book title $l$ is available in branch $i$ over some period of operation, under policy $(\vec{\beta}, \mathcal{F})$. We measure the total \textit{collection quality} of a branch as
$$
\CQ_i(\vec{\beta}, \mathcal{F}) = \sum_{l\in L} d_l D_i^l(\vec{\beta}, \mathcal{F}), \forall i\in B,
$$
where $d_l$ is the latent desirability score of title $l$. In other words, $\CQ_i(\vec{\beta}, \mathcal{F})$ measures the cumulative quality of a branch's collection over a long time period. 

The set of browser reserves that maximizes total on-shelf collection quality across branches is $\vec{\beta} = 1$.\footnote{When all the books are reserved for browser usage, which hold fulfillment method we choose is irrelevant, as no hold requests can be fulfilled.} That is, when all branches reserve their collection to only browser usage, we effectively disable the holds system and maximize collection quality. Thus, for each $(\vec{\beta}, \mathcal{F})$, we calculate ratios 
$$
\frac{\CQ_i(\vec{\beta}, \mathcal{F})}{\CQ_i(\textbf{1}, \mathcal{F}^*)}
$$ for each branch $i\in B$, which captures the gap between collection quality under policy $(\vec{\beta}, \mathcal{F})$, compared to the `ideal' collection quality without a holds system.

Finally, our browser experience objective is defined as
$$
g(\vec{\beta}, \mathcal{F}) = \frac{\sum_{i\in B} (1-h_i) \frac{\CQ_i(\vec{\beta}, \mathcal{F})}{\CQ_i(\textbf{1}, \mathcal{F}^*)}}{\sum_{i\in B} (1-h_i)},
$$
where $1-h_i$ is the browser usage fraction of branch $i$. In words, we are measuring the browser usage-weighted average of the collection quality ratio of each branch.\footnote{We also conduct additional analyses on when $g$ takes the form of Nash welfare function (i.e., geometric mean) of the ratios, and the resulting empirical trade-off is similar to results presented in the main text. See \Cref{sec:nash_app}.} When $\vec{\beta}=\textbf{1}$, we have $g(\textbf{1}, \mathcal{F}^*) = 1$, and with decreasing $\vec{\beta}$, $g$ tends to also decrease -- lower reserve fractions lead to higher usage, and thus a smaller collection on shelves.

\subsection{Two classes of policies}
\label{sec:sim_fulfillment_policy}

In this section, we introduce the two policy classes considered in our simulation framework. The first class of policies $(\vec{\beta}, \mathcal{F}^*)$, referred to as \textit{near-optimal usage-maximizing policies}, consists of a set of branch-specific browser reserve fractions $\vec{\beta}\in [0,1]^{|B|}$, and uses near-optimal hold fulfillment method $\mathcal{F}^*$. The second class of policies $(\vec{\beta}, \hat{\mathcal{F}}^*)$, referred to as \textit{implementable tiered policies}, similarly consists of a set of browser reserve fractions but uses the implementable tiered hold fulfillment method $\hat{\mathcal{F}}^*$.



For both classes of policies, we initialize inventories using reserve fractions ${\beta_i}$ and the total inventory levels $c_i$: at each branch $i$, $c_{i,\text{reserve}}$ copies are initialized to be reserved for browsers, and $c_{i, \text{non-reserve}}$ can serve both browsers and hold users, where 
$$
c_{i,\text{reserve}}\sim \text{Binomial}(c_i, \beta_i),
$$ 
and $c_{i, \text{non-reserve}} = c_i - c_{i,\text{reserve}}$.\footnote{We randomize this browser reserve assignment in our simulation, because for the majority of book titles, only very few copies are available at each branch, and a simple deterministic rounding of $c_i\beta_i$ would mean that the inventory levels, and in turn the sample paths are highly non-smooth as a function of $\beta_i$, making optimization challenging. We note that in practice, setting browser reserves could be done directly on NYPL's digital management system, which makes this random assignment or another rounding approximation -- such that rounding across titles in the branch -- feasible.} The two classes of policies differ in how they fulfill hold requests.

\subsubsection{Near-optimal usage-maximizing policy} 
\label{sec:nearoptimalusagemaximizingsimulation}

To simulate near-optimal usage-maximizing fulfillment, we first calculate coefficients $\hat{\gamma}$ according to \Cref{alg:one} and the inventory levels initialized, and then dynamically allocate books to satisfy hold requests. We discuss two aspects of the fulfillment method not addressed in the modeling section, but are crucial for implementation: the design of reward $r_j$ for a checkout by each patron type $j$, and the selection of a time horizon $T$.

\paragraph{Design of reward $r_j$} Reward $r_j$ for a book checked out by a patron of type $j$ is crucial in defining usage and calculating coefficients $\gamma$ in \Cref{alg:one}. Our usage objective (\Cref{eq:usage_objective}) naturally yields a reward structure. In \Cref{eq:usage_objective}, the contribution of one checkout at branch $i$ to the efficiency objective is
\begin{equation}
\frac{p_i}{\sum_{i\in B} p_i}\frac{1}{\CO_i(\textbf{0}, \mathcal{F}^*)}, \forall i \in B,
\end{equation}
which is the coefficient on $\CO_i(\vec{\beta}, \mathcal{F})$ in \Cref{eq:usage_objective}. We note that a checkout at branch $i$ could be attributed to two types of patrons: $j\in \{(i,\text{browser}), (i,\text{hold})\}$. We set $r_j$ in this form for both these two types of patrons, and repeat for every branch $i\in B$. This gives us the desired near-optimal fulfillment that seeks to maximize the exact usage objective we are measuring.

\paragraph{Selection of time horizon $T$} Books in a library, unlike seats on a flight, are \textit{reusable}. Patrons return a book to their branch after reading for a period of time,\footnote{All copies of books are assigned a home branch - a branch where that copy is supposed to be in when it is returned. Regardless of whether a copy is checked out via hold (which can happen at a branch different from the home branch) or browsing, eventually, it is returned to that home branch.} and that exact same copy of the book can then be used to serve other patrons. Tackling policy design challenges for such reusable resources is still an open question (e.g., \citet{zhu2024performance} provide an analysis extending the flexible fulfillment method to the case where each resource is reused at most twice, and the policy becomes much more complicated). We circumvent this reusability issue by selecting a time horizon $T$ within which the inventory is approximately non-reusable: before the book has been returned. To calculate coefficients $\hat{\gamma}$, we choose $T=21$, which is the period that patrons are allowed to keep the book once checked out, and recompute every 21 days with the inventory level at that point. Intuitively, this means that in the long-run average, the number of books returned during any 21-day period is at most the number of books in the inventory at the start of that period, so the total number of books checked out is at most twice the starting inventory. This mimics each book being reusable for up to twice, which as \citet{zhu2024performance} show, approximates the non-reusable case well in their setting.


\subsubsection{Implementable tiered policy} To simulate implementable tiered fulfillment, we take a given near-optimal fulfillment method, and determine the tiers of each branch by the methods outlined in \Cref{sec:tiered_fulfillment_sec2}. 
The essential step in this approximation is in calculating the `average unit reward' defined by \Cref{eq:implementable_approx}. To adapt to the simulation (where instead of calculating $\gamma$ once, they are re-calculated once 21 days), we take the inventory levels and coefficients $\gamma$ calculated immediately after the warm-up period of the simulation, evaluate \Cref{eq:implementable_approx}, and set static tiers for each branch. This design choice is motivated by two reasons. First, inventory levels after the warm-up period are relatively stable, thus taking into account all the recalculated $\gamma$'s does not affect the assigned tiers substantially. Second, this calculation mimics the non-clairvoyant and static nature of practical implementation in NYPL, where we would only have access to coefficients in the past, and do not wish to adjust the tiers frequently.

\subsection{Simulation optimization}
\label{sec:sim_opt}
Our simulator can be seen as a mapping from policy $(\vec{\beta}, \mathcal{F})$ to usage objective $f$ and browser experience objective $g$. We aim to find Pareto-optimal policies in the usage-browser experience space. For our empirical application, we leverage multi-objective Bayesian optimization techniques under the BoTorch \citep{balandat2020botorch} framework, and more specifically, we use the parallel noisy expected hypervolume improvement acquisition function \citep{daulton2021parallel} to generate new parameters (policies) with expected improvement. This allows for fast parallel evaluations and policy updates without gradient information.

Our simulation optimization approach takes two steps. First, we optimize browser reserves within the class of near-optimal usage-maximizing policy. Mathematically, we aim to find a Pareto optimal set of browser reserves:
\begin{equation}
    \{\vec{\beta}^*\} = \argmax_{\vec{\beta}}\ \alpha f(\vec{\beta}, \mathcal{F}^*) + (1-\alpha) g(\vec{\beta}, \mathcal{F}^*), \quad \forall \alpha \in [0, 1].
\end{equation}
Note that we fix the fulfillment method to be $\mathcal{F}^*$ here, thus restricting to the near-optimal usage-maximizing policy class: the theoretical analysis allows us to restrict the parameter space over which to search.

In the second step, for each vector of reserve fractions found on the Pareto frontier, we evaluate its performance under the implementable tiered policy class. Concretely, we evaluate $f(\vec{\beta}, \mathcal{\hat{F}}^*)$ and $g(\vec{\beta}, \mathcal{\hat{F}}^*)$ through simulation, for each $\vec{\beta} \in \{\vec{\beta}^*\}$. 

\section{Empirical application}

\label{sec:results}
\begin{figure}[tb]
    \centering
    \includegraphics[width=0.97\linewidth]{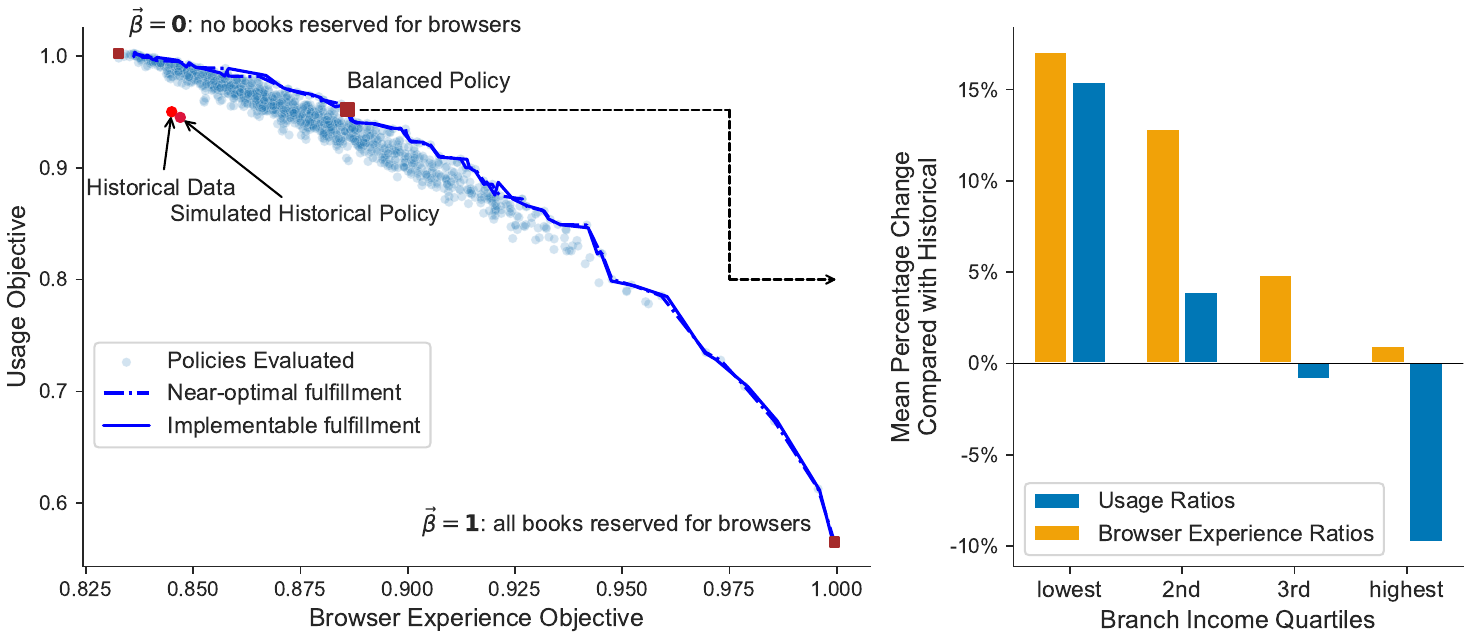}
    \caption{(Left) Pareto curve of the usage and browser experience objectives, based on 1,600 policies evaluated through the Bayesian optimization procedure. For each of the policies on the Pareto frontier, we additionally evaluate their implementable version. Historical data and a simulation of a policy approximating the historical policy achieve similar objective values. Crucially, we find many policies that Pareto dominate historical data. (Right) Examining one policy (denoted \textit{Balanced}) in detail, we find that it achieves higher browser experience generally across branches and more overall usage in branches in lower-income neighborhoods, at the cost of some usage at higher-income branches. These patterns reflect the balanced policy reserving some book copies for in-person browsers, who are disproportionately patrons of branches in lower-income neighborhoods.}
    \label{fig:pareto}
\end{figure}


We apply our approach to optimize the browser reserve and hold fulfillment for the New York Public Library (NYPL). We first present the Pareto frontiers of policies, and then focus on a policy that balances usage and browser experience well, and discuss its implications for different branches. 

\subsection{Data and computational methods}

The input to our simulator are arrival rates $\lambda_j, j\in P$ estimated using historical data from NYPL in calendar year 2022. The starting inventory levels $c_i, i\in B$ are set to be the total number of copies available at each branch $i$. For each simulation run, we start with a 100-day warm-up period, after which usage stabilizes, and then collect sample paths on checkouts and collection quality. We include 3,809 book titles across 84 branches in the simulation. These book titles are available at 20 or more branches, and though they only make up 1.27\% (out of 300,389) of all book titles within NYPL, in 2022, checkouts of these books account for 31.01\% (433,998 out of 1,399,351) of total checkouts. One salient difference with adding more titles is that there may be a sharper trade-off between holds usage and browser experience for titles with fewer copies. For branch and title level parameters (overall brach demand, hold usage fraction, and title desirability), we use the values calculated by \citet{liu2024identifying}, and adjust them to better calibrate to historical data. 

We run Bayesian optimization in batches of 16 policies, updating the acquisition functions between each batch. We initialize the optimization using a set of 16 policies in which the browser reserve fractions are set to a grid between 0 and 1, uniformly for all branches. This initialization allows the optimizer to explore the extremes of the parameter space, even though some of the extremes are sub-optimal. Each batch iteration of the Bayesian optimization procedure (which includes optimizing the acquisition function to get candidate policies, and then evaluating a batch of candidate policies through simulation) takes 15 to 60 minutes using 64 CPU cores and 512GB of RAM. We find that after 100 iterations, marginal improvements of new policy evaluations are small, indicating the policies found are close to the desired Pareto frontier. Data and methods are further discussed in \Cref{app:backgroun_application}. 



\subsection{Results}

We now analyze the results of our application. \Cref{fig:pareto} shows the Pareto frontier of usage and browser experience; it further compares the outcomes for a particular balanced policy to that of the historical policies, for branches in neighborhoods with varying average income levels.\footnote{This balanced policy we present is within the class of implementable tiered policy. The exact parameters of this policy are presented in Appendix \Cref{fig:policy_params}.} \Cref{fig:implications} breaks down these implications for individual branches, again for the balanced policy analyzed above, compared to the historical outcomes. Finally, \Cref{fig:focusonebranch} zooms in on one branch in particular, showing how its net outflows to different branches and shelve stocks differ between the balanced and historical policies. We give an overview of the insights from these analyses.

\paragraph{A substantial trade-off exists between usage and browser experience} In \Cref{fig:pareto}, at the two extremes, we have when no books are reserved for browsers ($\vec{\beta}=\textbf{0}$; roughly corresponding to the status quo, but with optimal fulfillment) and when all books are reserved for browsers ($\vec{\beta}=\textbf{1}$; effectively eliminating the holds system). From $\vec{\beta}=\textbf{0}$ to $\vec{\beta}=\textbf{1}$, we see more than 40\% drop in usage, and in the other direction, we see approximately 17\% drop in browser experience. On the one hand, this indicates that the holds system is indeed effective in improving usage, and is a useful addition to library systems that do not have it in place.

\begin{figure}[tb]
    \centering
    
    \includegraphics[width=.95\textwidth]{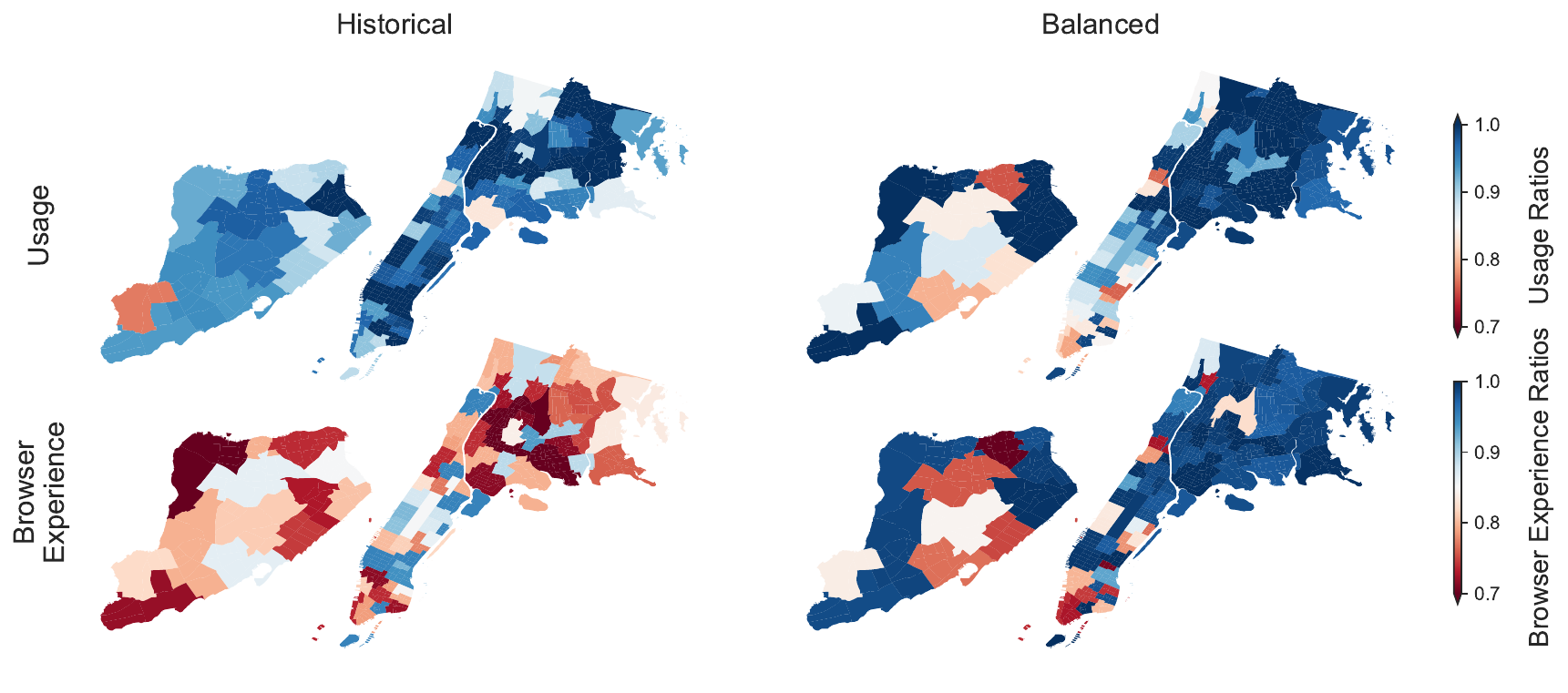}
    \caption{Map of the usage and browser experience objectives at each branch, for each of the historical data and under the Balanced policy (ratios between the usage and browser experience values for the respective policy and the `optimal' benchmark for that objective). We find that (1) the Balanced policy preserves usage for most branches compared to the Historical policy, except for some branches in lower Manhattan (lower tip of the center part of the map); (2) the balanced policy improves browser experience generally across branches, and especially branches in the Bronx (upper right), which are predominantly serving lower-income neighborhoods, who rely on browsing more (have lower hold usage). This is consistent with the Pareto curve in \Cref{fig:pareto}, which shows that the historical policy relatively prioritized overall usage, at the expense of the browser experience.}
    \label{fig:implications}
\end{figure}

\begin{figure}[tb]
    \centering
    
    \begin{subfigure}[b]{0.68\textwidth}
        \centering
        \includegraphics[width=\textwidth]{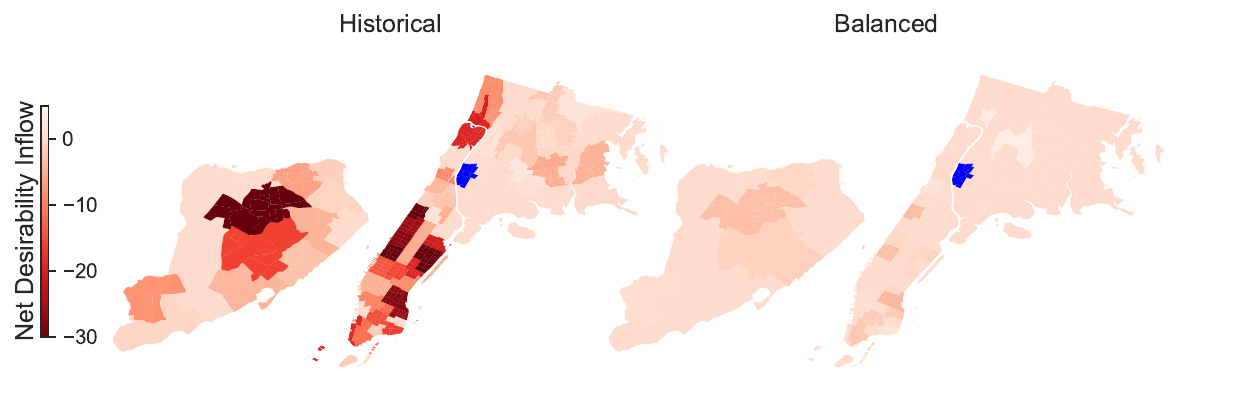}
        \caption{Comparison of net inflow patterns}
        \label{fig:netinflowchangeonebranch}
    \end{subfigure}
    \begin{subfigure}[b]{0.31\textwidth}
        \centering
        \includegraphics[width=\textwidth]{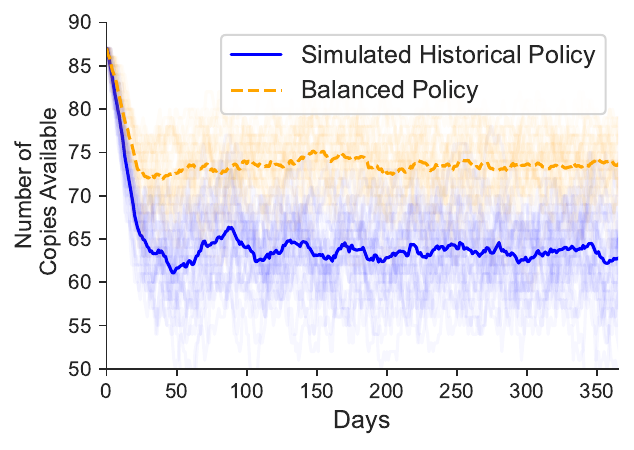}
        \caption{Comparison of inventory levels}
        \label{fig:inventorychangeonebranch}
    \end{subfigure}
    \hfill
    \caption{We study the impact of the balanced policy on one branch in particular: High Bridge Library in the Bronx, which serves an area highlighted in blue in subfigure (a). In the calendar year 2022, this moderately sized branch was among the lowest hold usage branches, and incurred the 6th highest net desirability outflow, as shown in \Cref{fig:1c}. As illustrated in (a), historically, books from this branch are used to fulfill hold requests from branches predominantly in midtown and downtown Manhattan, and the Upper West and Upper East Sides (dark red areas in the middle of the plot); in contrast, under the Balanced policy, it incurs much less overall outflow, with an evenly distributed net outflow pattern. We further study the total inventory levels of 60 of the most popular book titles (roughly 2\% of the total number of book titles simulated) at this branch. In subfigure (b), we compare the inventory levels over 30 simulation runs under the Balanced policy and approximate historical policy. The Balanced policy achieves higher inventory levels at equilibrium, providing a better browser experience at this branch.}
    \label{fig:focusonebranch}
\end{figure}

Historical policy represents a 5\% gap to optimal usage and a 15\% gap to optimal browser experience -- it is more aligned with optimizing usage rather than browser experience. This is no surprise, as the historical policy is essentially a combination of no browser reserve ($\vec{\beta}=\textbf{0}$) and sub-optimal hold fulfillment. This gap in browser experience disproportionately affects patrons in lower-income neighborhoods. 

\paragraph{A balanced policy achieves better browser experience without substantial impact on overall usage.} In \Cref{fig:pareto}, we find an array of policies that each Pareto dominate the historical policy. Taking a closer look at one of the policies that appear to achieve a good balance between usage and browser experience, we find that compared with historical data, it improves browser experience generally across branches and increases usage in branches in lower-income neighborhoods, at a relatively small cost of a decrease in usage in higher-income neighborhoods. This is a result of reserving higher levels of book copies for in-person browsers at lower-income branches, as well as assigning them lower tiers for hold fulfillment, as shown in Appendix \Cref{fig:policy_params}.

Mapping out these impacts spatially in \Cref{fig:implications}, one major improvement we see is in the borough of the Bronx, shown in the upper right corner, where a balanced policy achieves improved browser experience in all of the 35 branches in this borough. Historically lower-income and diverse, patrons from the Bronx rely on service from public libraries and particularly rely on in-person browsing. This balanced policy helps NYPL better serve patrons here. 

\paragraph{Balancing browser experience induces lower outflow and higher inventory levels at branches with high browser usage.} Focusing on one branch in particular, the High Bridge Library in the Bronx, we find more nuanced improvements that the balanced policy brings. As presented in \Cref{fig:netinflowchangeonebranch}, historically, much of the collection from High Bridge is used to satisfy hold requests from other branches, which not only sacrifices browser experience but also places a logistical burden on library staff in this modest branch.\footnote{Library staff need to physically retrieve a copy of the book requested, and place it on a holds shelf awaiting transportation. Such tasks may take up a large part of a staff's time.} Under the balanced policy, there is less outflow of books from this branch overall, and the distribution is more even spatially. One aspect of the browser experience is how many of the most popular books are in the branch. In \Cref{fig:inventorychangeonebranch}, we compare the inventory levels of the 60 most popular book titles in High Bridge, under each of the historical and balanced policies. We observe that after warm-up, inventory under the balanced policy stabilizes at levels higher than the historical policy---even though they started at the same inventory level.

Although we are not specifically optimizing for these more nuanced measures, a balanced policy under our objectives naturally induces such properties. Furthermore, our framework is general and flexible enough to incorporate these measures into the objective, if they are of interest to library staff and system policymakers. In \Cref{app:sup}, we present additional information about these results, and discuss the relative contribution of our two policy levers.




\section{Discussion}

We design practical policies for holds systems in urban libraries, leveraging techniques from network revenue management and simulation optimization. First, we derive a $\frac{1}{2}$-approximation holds fulfillment method, that optimizes overall usage through the choice of source branch from which a holds request is fulfilled. Second, we embed this near-optimal fulfillment policy into a simulation-optimization framework, using which we can additionally choose a browser \textit{reservation} policy to jointly optimize overall usage and browser experience metrics. Finally, we apply our approach empirically using data from the New York Public Library, tracing a Pareto frontier of policies that optimize combinations of overall usage and browser experience. Evaluating one balanced policy in detail, we find that it improves upon both metrics over the historical policy and especially improves spatial and income-based equity, for both overall usage and browser experience. 

Methodologically, our work advances the data-driven optimization of implementable policies in hybrid computational-offline systems. We first derive an optimal policy (for one objective) theoretically, and then embed it into a simulation-optimization framework to optimize a second lever and two objectives, jointly. As a result, our approach provides both theoretical guarantees (for one objective) and flexibility to adapt to further constraints and desiderata from practice. We believe such an approach will be fruitful in other application domains, especially in characterizing tradeoffs between efficiency and another objective---substantial theoretical work already optimizes efficiency, and a characteristic of non-efficiency metrics (especially fairness-related ones) is that they may be challenging to theoretically analyze, given their global and non-linear nature. In this way, we hope that our work contributes to the growing literature on efficiency-equity tradeoffs in practice, both methodologically and by providing an empirical demonstration in a novel domain. 

One limitation of our current analysis is the  lack of a validation using out-of-sample data under distribution shift---our methods are optimized and then evaluated using the same simulator (though different sample path realizations). One possible approach is to validate policies on a simulator calibrated to other periods of historical data. We expect our approach to perform simularly when doing so---library usage patterns do not change dramatically yearly; as we need to consider a large number of book titles in the simulation (otherwise the simulator is not reflective of the overall usage pattern of each branch), this naturally smooths distribution shifts in usage patterns of individual book titles. Another possible approach is to optimize policies on one subset of book titles, and then validate on another subset. This approach risks losing the representativeness of the simulator to actual usage, when the policy would be applied to all books simultaneously. 


\paragraph{Future work} The best way to validate our results out-of-distribution is to put them into practice. As discussed, our policy solutions can be integrated into NYPL's current system without substantial computational changes, and we are actively discussing the feasibility of integrating these policy innovations into practice.
Our framework can also be easily applied to study this design question in other urban public library systems. For example, the public library systems in Los Angeles (72 branches), Chicago (81 branches), Houston (44 branches), and Seattle (27 branches) all operate similar holds systems, and have the capacity to carry out our analyses without significant modifications to our approach. Upon publication, we will share our code and simulator for such adaptation.  

Theoretically, our work also raises novel questions for the network revenue management literature. First, how to optimize both usage (fulfillment of demand using resources) and browser experience (value of resources on the shelves). This latter desiderata is related to but distinct from other objectives, such as the quality of assortments offered to customers (as considered in \citet{van2002variety}) and supply chain robustness. Second, how to jointly optimize fulfillment (and related) policies across a large set of \textit{heterogeneous} goods, when an offline policy must be shared across goods due to practical implementation constraints.



\newpage
\bibliographystyle{plainnat}
\bibliography{ref.bib}

\begin{thebibliography}{47}
\providecommand{\natexlab}[1]{#1}
\providecommand{\url}[1]{\texttt{#1}}
\expandafter\ifx\csname urlstyle\endcsname\relax
  \providecommand{\doi}[1]{doi: #1}\else
  \providecommand{\doi}{doi: \begingroup \urlstyle{rm}\Url}\fi

\bibitem[Azizi et~al.(2018)Azizi, Vayanos, Wilder, Rice, and Tambe]{azizi2018designing}
Mohammad~Javad Azizi, Phebe Vayanos, Bryan Wilder, Eric Rice, and Milind Tambe.
\newblock Designing fair, efficient, and interpretable policies for prioritizing homeless youth for housing resources.
\newblock In \emph{Integration of Constraint Programming, Artificial Intelligence, and Operations Research: 15th International Conference, CPAIOR 2018, Delft, The Netherlands, June 26--29, 2018, Proceedings 15}, pages 35--51. Springer, 2018.

\bibitem[Balandat et~al.(2020)Balandat, Karrer, Jiang, Daulton, Letham, Wilson, and Bakshy]{balandat2020botorch}
Maximilian Balandat, Brian Karrer, Daniel~R. Jiang, Samuel Daulton, Benjamin Letham, Andrew~Gordon Wilson, and Eytan Bakshy.
\newblock {BoTorch: A Framework for Efficient Monte-Carlo Bayesian Optimization}.
\newblock In \emph{Advances in Neural Information Processing Systems 33}, 2020.

\bibitem[Banerjee and Smilowitz(2019)]{banerjee2019incorporating}
Dipayan Banerjee and Karen Smilowitz.
\newblock Incorporating equity into the school bus scheduling problem.
\newblock \emph{Transportation research part E: logistics and transportation review}, 131:\penalty0 228--246, 2019.

\bibitem[Banerjee et~al.(2023)Banerjee, Hssaine, and Sinclair]{banerjee2023online}
Siddhartha Banerjee, Chamsi Hssaine, and Sean~R Sinclair.
\newblock Online fair allocation of perishable resources.
\newblock \emph{ACM SIGMETRICS Performance Evaluation Review}, 51\penalty0 (1):\penalty0 55--56, 2023.

\bibitem[Bertsimas et~al.(2011)Bertsimas, Farias, and Trichakis]{bertsimas2011price}
Dimitris Bertsimas, Vivek~F Farias, and Nikolaos Trichakis.
\newblock The price of fairness.
\newblock \emph{Operations research}, 59\penalty0 (1):\penalty0 17--31, 2011.

\bibitem[Bertsimas et~al.(2012)Bertsimas, Farias, and Trichakis]{bertsimas2012efficiency}
Dimitris Bertsimas, Vivek~F Farias, and Nikolaos Trichakis.
\newblock On the efficiency-fairness trade-off.
\newblock \emph{Management Science}, 58\penalty0 (12):\penalty0 2234--2250, 2012.

\bibitem[Bertsimas et~al.(2020)Bertsimas, Delarue, Eger, Hanlon, and Martin]{bertsimas2020bus}
Dimitris Bertsimas, Arthur Delarue, William Eger, John Hanlon, and Sebastien Martin.
\newblock Bus routing optimization helps boston public schools design better policies.
\newblock \emph{INFORMS Journal on Applied Analytics}, 50\penalty0 (1):\penalty0 37--49, 2020.

\bibitem[Besbes and Zeevi(2012)]{besbes2012blind}
Omar Besbes and Assaf Zeevi.
\newblock Blind network revenue management.
\newblock \emph{Operations research}, 60\penalty0 (6):\penalty0 1537--1550, 2012.

\bibitem[Chen et~al.(2023)Chen, Liang, Golrezaei, and Bouneffouf]{chen2023interpolating}
Qinyi Chen, Jason Cheuk~Nam Liang, Negin Golrezaei, and Djallel Bouneffouf.
\newblock Interpolating item and user fairness in multi-sided recommendations.
\newblock \emph{Available at SSRN 4476512}, 2023.

\bibitem[Cheng et~al.(2021)Cheng, Wu, Moen, and Hong]{cheng2021assessing}
Wenting Cheng, Jiahui Wu, William Moen, and Lingzi Hong.
\newblock Assessing the spatial accessibility and spatial equity of public libraries' physical locations.
\newblock \emph{Library \& Information Science Research}, 43\penalty0 (2):\penalty0 101089, 2021.

\bibitem[Chung et~al.(2018)Chung, Freund, and Shmoys]{chung2018bike}
Hangil Chung, Daniel Freund, and David~B Shmoys.
\newblock Bike angels: An analysis of citi bike's incentive program.
\newblock In \emph{Proceedings of the 1st ACM SIGCAS Conference on Computing and Sustainable Societies}, pages 1--9, 2018.

\bibitem[Daulton et~al.(2021)Daulton, Balandat, and Bakshy]{daulton2021parallel}
Samuel Daulton, Maximilian Balandat, and Eytan Bakshy.
\newblock Parallel bayesian optimization of multiple noisy objectives with expected hypervolume improvement.
\newblock \emph{Advances in Neural Information Processing Systems}, 34:\penalty0 2187--2200, 2021.

\bibitem[Del Barrio-Tellado et~al.(2021)Del Barrio-Tellado, G{\'o}mez-Vega, G{\'o}mez-Zapata, and Herrero-Prieto]{del2021urban}
Mar{\'\i}a~Jos{\'e} Del Barrio-Tellado, Mafalda G{\'o}mez-Vega, Jonathan~Daniel G{\'o}mez-Zapata, and Luis~C{\'e}sar Herrero-Prieto.
\newblock Urban public libraries: performance analysis using dynamic-network-dea.
\newblock \emph{Socio-Economic Planning Sciences}, 74:\penalty0 100928, 2021.

\bibitem[Delarue et~al.(2024)Delarue, Lian, and Martin]{delarue2024algorithmic}
Arthur Delarue, Zhen Lian, and Sebastien Martin.
\newblock Algorithmic precision and human decision: A study of interactive optimization for school schedules.
\newblock \emph{Available at SSRN 4324076}, 2024.

\bibitem[Ferreira et~al.(2018)Ferreira, Simchi-Levi, and Wang]{ferreira2018online}
Kris~Johnson Ferreira, David Simchi-Levi, and He~Wang.
\newblock Online network revenue management using thompson sampling.
\newblock \emph{Operations research}, 66\penalty0 (6):\penalty0 1586--1602, 2018.

\bibitem[Freund et~al.(2020)Freund, Norouzi-Fard, Paul, Wang, Henderson, and Shmoys]{freund2020data}
Daniel Freund, Ashkan Norouzi-Fard, Alice Paul, Carter Wang, Shane~G Henderson, and David~B Shmoys.
\newblock Data-driven rebalancing methods for bike-share systems.
\newblock In \emph{Analytics for the sharing economy: Mathematics, engineering and business perspectives}, pages 255--278. Springer, 2020.

\bibitem[Gallego and Phillips(2004)]{gallego2004revenue}
Guillermo Gallego and Robert Phillips.
\newblock Revenue management of flexible products.
\newblock \emph{Manufacturing \& Service Operations Management}, 6\penalty0 (4):\penalty0 321--337, 2004.

\bibitem[Gallego and Van~Ryzin(1994)]{gallego1994optimal}
Guillermo Gallego and Garrett Van~Ryzin.
\newblock Optimal dynamic pricing of inventories with stochastic demand over finite horizons.
\newblock \emph{Management science}, 40\penalty0 (8):\penalty0 999--1020, 1994.

\bibitem[Gallego et~al.(2004)Gallego, Iyengar, Phillips, and Dubey]{gallego2004managing}
Guillermo Gallego, Garud Iyengar, Robert Phillips, and Abhay Dubey.
\newblock Managing flexible products on a network.
\newblock \emph{Available at SSRN 3567371}, 2004.

\bibitem[Greenwood et~al.(2024)Greenwood, Chiniah, and Garg]{greenwood2024useritem}
Sophie Greenwood, Sudalakshmee Chiniah, and Nikhil Garg.
\newblock User-item fairness tradeoffs in recommendations.
\newblock In \emph{The Thirty-eighth Annual Conference on Neural Information Processing Systems}, 2024.
\newblock URL \url{https://openreview.net/forum?id=ZOZjMs3JTs}.

\bibitem[Guo et~al.(2017)Guo, Chan, and Yip]{guo2017spatial}
Yingqi Guo, Chee~Hon Chan, and Paul~SF Yip.
\newblock Spatial variation in accessibility of libraries in hong kong.
\newblock \emph{Library \& Information Science Research}, 39\penalty0 (4):\penalty0 319--329, 2017.

\bibitem[He et~al.(2021)He, Zheng, Belavina, and Girotra]{he2021customer}
Pu~He, Fanyin Zheng, Elena Belavina, and Karan Girotra.
\newblock Customer preference and station network in the london bike-share system.
\newblock \emph{Management Science}, 67\penalty0 (3):\penalty0 1392--1412, 2021.

\bibitem[Jalota et~al.(2021)Jalota, Solovey, Gopalakrishnan, Zoepf, Balakrishnan, and Pavone]{jalota2021efficiency}
Devansh Jalota, Kiril Solovey, Karthik Gopalakrishnan, Stephen Zoepf, Hamsa Balakrishnan, and Marco Pavone.
\newblock When efficiency meets equity in congestion pricing and revenue refunding schemes.
\newblock In \emph{Proceedings of the 1st ACM Conference on Equity and Access in Algorithms, Mechanisms, and Optimization}, pages 1--11, 2021.

\bibitem[Jian et~al.(2016)Jian, Freund, Wiberg, and Henderson]{jian2016simulation}
Nanjing Jian, Daniel Freund, Holly~M Wiberg, and Shane~G Henderson.
\newblock Simulation optimization for a large-scale bike-sharing system.
\newblock In \emph{2016 Winter Simulation Conference (WSC)}, pages 602--613. IEEE, 2016.

\bibitem[Koenecke et~al.(2023)Koenecke, Giannella, Willer, and Goel]{koenecke2023popular}
Allison Koenecke, Eric Giannella, Robb Willer, and Sharad Goel.
\newblock Popular support for balancing equity and efficiency in resource allocation: A case study in online advertising to increase welfare program awareness.
\newblock In \emph{Proceedings of the International AAAI Conference on Web and Social Media}, volume~17, pages 494--506, 2023.

\bibitem[Liang et~al.(2022)Liang, Lu, and Mu]{liang2022algorithmic}
Annie Liang, Jay Lu, and Xiaosheng Mu.
\newblock Algorithmic design: Fairness versus accuracy.
\newblock In \emph{Proceedings of the 23rd ACM Conference on Economics and Computation}, pages 58--59, 2022.

\bibitem[Liu and Garg(2024)]{liu2024redesigning}
Zhi Liu and Nikhil Garg.
\newblock Redesigning service level agreements: Equity and efficiency in city government operations.
\newblock In \emph{Proceedings of the 25th ACM Conference on Economics and Computation}, pages 309--309, 2024.

\bibitem[Liu et~al.(2024)Liu, Rankin, and Garg]{liu2024identifying}
Zhi Liu, Sarah Rankin, and Nikhil Garg.
\newblock Identifying and addressing disparities in public libraries with bayesian latent variable modeling.
\newblock In \emph{Proceedings of the AAAI Conference on Artificial Intelligence}, volume~38, pages 22258--22265, 2024.

\bibitem[Ma et~al.(2020)Ma, Rusmevichientong, Sumida, and Topaloglu]{ma2020approximation}
Yuhang Ma, Paat Rusmevichientong, Mika Sumida, and Huseyin Topaloglu.
\newblock An approximation algorithm for network revenue management under nonstationary arrivals.
\newblock \emph{Operations Research}, 68\penalty0 (3):\penalty0 834--855, 2020.

\bibitem[Maheshwari et~al.(2024)Maheshwari, Kulkarni, Pai, Yang, Wu, and Sastry]{maheshwari2024congestion}
Chinmay Maheshwari, Kshitij Kulkarni, Druv Pai, Jiarui Yang, Manxi Wu, and Shankar Sastry.
\newblock Congestion pricing for efficiency and equity: Theory and applications to the san francisco bay area.
\newblock \emph{arXiv preprint arXiv:2401.16844}, 2024.

\bibitem[Manshadi et~al.(2021)Manshadi, Niazadeh, and Rodilitz]{manshadi2021fair}
Vahideh Manshadi, Rad Niazadeh, and Scott Rodilitz.
\newblock Fair dynamic rationing.
\newblock In \emph{Proceedings of the 22nd ACM Conference on Economics and Computation}, pages 694--695, 2021.

\bibitem[Manshadi et~al.(2023)Manshadi, Rodilitz, Saban, and Suresh]{manshadi2023redesigning}
Vahideh Manshadi, Scott Rodilitz, Daniela Saban, and Akshaya Suresh.
\newblock Redesigning volunteermatch's ranking algorithm: Toward more equitable access to volunteers.
\newblock \emph{Available at SSRN 4497747}, 2023.

\bibitem[Meier et~al.(2023)Meier, Davis, and Xu]{Meier2023Effectiveness}
K.~Meier, Jourdan~A. Davis, and Xiaoyang Xu.
\newblock Effectiveness, efficiency and equity tradeoffs in public programs: A citizen experiment.
\newblock \emph{Public Administration Review}, 2023.
\newblock \doi{10.1111/puar.13690}.

\bibitem[Monachou and Stoica(2022)]{monachou2022fairness}
Faidra Monachou and Ana-Andreea Stoica.
\newblock Fairness and equity in resource allocation and decision-making: an annotated reading list.
\newblock \emph{ACM SIGecom Exchanges}, 20\penalty0 (1):\penalty0 64--66, 2022.

\bibitem[{NYC Mayor's Office of Operations}(2023)]{nypl2023}
{NYC Mayor's Office of Operations}.
\newblock {Mayor's Management Report}, 2023.
\newblock URL \url{https://www.nyc.gov/site/operations/performance/mmr.page}.

\bibitem[Ostrovsky and Yang(2024)]{ostrovsky2024effective}
Michael Ostrovsky and Frank Yang.
\newblock Effective and equitable congestion pricing: New york city and beyond.
\newblock 2024.

\bibitem[Patro et~al.(2022)Patro, Porcaro, Mitchell, Zhang, Zehlike, and Garg]{patroranking}
Gourab~K. Patro, Lorenzo Porcaro, Laura Mitchell, Qiuyue Zhang, Meike Zehlike, and Nikhil Garg.
\newblock Fair ranking: a critical review, challenges, and future directions.
\newblock In \emph{Proceedings of the 2022 ACM Conference on Fairness, Accountability, and Transparency}, FAccT '22, page 1929–1942, New York, NY, USA, 2022. Association for Computing Machinery.
\newblock ISBN 9781450393522.
\newblock \doi{10.1145/3531146.3533238}.
\newblock URL \url{https://doi.org/10.1145/3531146.3533238}.

\bibitem[Real et~al.(2014)Real, Bertot, and Jaeger]{real2014rural}
Brian Real, John~Carlo Bertot, and Paul~T Jaeger.
\newblock Rural public libraries and digital inclusion: Issues and challenges.
\newblock \emph{Information Technology and Libraries}, 33\penalty0 (1):\penalty0 6--24, 2014.

\bibitem[Reichmann(2004)]{reichmann2004measuring}
Gerhard Reichmann.
\newblock Measuring university library efficiency using data envelopment analysis.
\newblock 2004.

\bibitem[Shim(2003)]{shim2003applying}
Wonsik Shim.
\newblock Applying dea technique to library evaluation in academic research libraries.
\newblock 2003.

\bibitem[Sin(2011)]{sin2011neighborhood}
Sei-Ching~Joanna Sin.
\newblock Neighborhood disparities in access to information resources: Measuring and mapping us public libraries’ funding and service landscapes.
\newblock \emph{Library \& Information Science Research}, 33\penalty0 (1):\penalty0 41--53, 2011.

\bibitem[Sinclair et~al.(2022)Sinclair, Banerjee, and Yu]{sinclair2022sequential}
Sean~R Sinclair, Siddhartha Banerjee, and Christina~Lee Yu.
\newblock Sequential fair allocation: Achieving the optimal envy-efficiency tradeoff curve.
\newblock \emph{ACM SIGMETRICS Performance Evaluation Review}, 50\penalty0 (1):\penalty0 95--96, 2022.

\bibitem[Singla et~al.(2015)Singla, Santoni, Bartók, Mukerji, Meenen, and Krause]{Singla_Santoni_Bartók_Mukerji_Meenen_Krause_2015}
Adish Singla, Marco Santoni, Gábor Bartók, Pratik Mukerji, Moritz Meenen, and Andreas Krause.
\newblock Incentivizing users for balancing bike sharing systems.
\newblock \emph{Proceedings of the AAAI Conference on Artificial Intelligence}, 29\penalty0 (1), Feb. 2015.
\newblock \doi{10.1609/aaai.v29i1.9251}.
\newblock URL \url{https://ojs.aaai.org/index.php/AAAI/article/view/9251}.

\bibitem[Talluri and Van~Ryzin(1998)]{talluri1998analysis}
Kalyan Talluri and Garrett Van~Ryzin.
\newblock An analysis of bid-price controls for network revenue management.
\newblock \emph{Management science}, 44\penalty0 (11-part-1):\penalty0 1577--1593, 1998.

\bibitem[Torrico et~al.(2024)Torrico, Boonsiriphatthanajaroen, Garg, Lodi, and Mainguy]{torrico2024equitable}
Alfredo Torrico, Natthawut Boonsiriphatthanajaroen, Nikhil Garg, Andrea Lodi, and Hugo Mainguy.
\newblock Equitable congestion pricing under the markovian traffic model: An application to bogota.
\newblock \emph{ACM Conference on Economics and Computation}, 2024.

\bibitem[Van~Herpen and Pieters(2002)]{van2002variety}
Erica Van~Herpen and Rik Pieters.
\newblock The variety of an assortment: An extension to the attribute-based approach.
\newblock \emph{Marketing Science}, 21\penalty0 (3):\penalty0 331--341, 2002.

\bibitem[Zhu and Topaloglu(2024)]{zhu2024performance}
Wenchang Zhu and Huseyin Topaloglu.
\newblock Performance guarantees for network revenue management with flexible products.
\newblock \emph{Manufacturing \& Service Operations Management}, 26\penalty0 (1):\penalty0 252--270, 2024.

\end{thebibliography}

\newpage
\pagebreak
\appendix

\section{Omitted proofs}
\label{app:proofs}

\subsection{Proof for \Cref{thm:main}}

The proof for our main theorem relies on the following lemma, which states that the value function approximation $H_1(\textbf{c})$ underestimates $V_1(\textbf{c})$ by at most a factor of 2:

\begin{restatable}{lemma}{propvaluefunc}
    \label{lem:1}
     $2H_1(\textbf{c}) \ge V_1(\textbf{c})$.
\end{restatable}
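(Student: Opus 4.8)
The plan is to reduce the lemma to the single inequality $2H_1(\textbf{c}) \ge \text{LP}$, since \Cref{prop:LPUB} already supplies $\text{LP}\ge V_1(\textbf{c})$. To bound the LP from above I would pass to its dual. Introducing multipliers $\mu_j\ge 0$ for the arrival constraints \ref{constr:lp1} and $\theta_a\ge 0$ for the inventory constraints \ref{constr:inventory}, the dual of \eqref{eq:fluid-lp} is
\begin{equation*}
\min_{\mu,\theta\ge 0}\ \sum_{j\in P} T\lambda_j\mu_j + \sum_{a\in A} c_a\theta_a \quad\text{s.t.}\quad \mu_j + \theta_a \ge r_j,\ \ \forall j\in P,\ a\in A_j.
\end{equation*}
By weak duality it then suffices to exhibit any dual-feasible $(\mu,\theta)$ whose objective is at most $2H_1(\textbf{c})$.

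First I would obtain a closed form for $H_1(\textbf{c})$ by telescoping the recursion in \Cref{alg:one}. Since $\hat\gamma_{a,T+1}=0$, we have $\hat\gamma_{a1}=\sum_{t=1}^T(\hat\gamma_{at}-\hat\gamma_{a,t+1})$; multiplying by $c_a$ cancels the $1/c_a$ factor in the update, and summing over $a$ collapses the indicator $\mathbf 1(a=a^*_{jt})$ through the definition $a^*_{jt}=\argmax_{a\in A_j}(r_j-\hat\gamma_{a,t+1})^+$, yielding
\begin{equation*}
H_1(\textbf{c}) = \sum_{a\in A} c_a\hat\gamma_{a1} = \sum_{t=1}^T\sum_{j\in P}\lambda_j\max_{a\in A_j}(r_j-\hat\gamma_{a,t+1})^+ .
\end{equation*}
Two monotonicity facts come from the same recursion and are what I would verify next: each increment is nonnegative, so $\hat\gamma_{a1}\ge 0$, and $\hat\gamma_{at}$ is nonincreasing in $t$, hence $\hat\gamma_{a,t+1}\le\hat\gamma_{a1}$ for every $t\in[T]$.

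With these in hand I would set $\theta_a=\hat\gamma_{a1}$ and $\mu_j=\max_{a\in A_j}(r_j-\hat\gamma_{a1})^+$. Both are nonnegative, and feasibility is immediate: for each $a\in A_j$, $\mu_j\ge(r_j-\hat\gamma_{a1})^+\ge r_j-\hat\gamma_{a1}=r_j-\theta_a$. For the objective, the inventory term is exactly $\sum_{a}c_a\theta_a=H_1(\textbf{c})$. For the arrival term, the monotonicity $\hat\gamma_{a,t+1}\le\hat\gamma_{a1}$ gives $\max_{a\in A_j}(r_j-\hat\gamma_{a,t+1})^+\ge\max_{a\in A_j}(r_j-\hat\gamma_{a1})^+=\mu_j$ for every $t$, so plugging this into the closed form above yields $H_1(\textbf{c})\ge\sum_{j}T\lambda_j\mu_j$. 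Adding the two bounds gives a dual objective of at most $2H_1(\textbf{c})$, and chaining weak duality with \Cref{prop:LPUB} closes the argument.

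The main obstacle is bookkeeping rather than a deep idea: pinning down the telescoped closed form exactly (in particular that the per-step increment picks out precisely $\max_{a\in A_j}(r_j-\hat\gamma_{a,t+1})^+$ via $a^*_{jt}$), and confirming the $t$-monotonicity of $\hat\gamma$ that makes the chosen $\mu_j$ simultaneously dual-feasible and cheap enough. Everything else is standard LP duality.
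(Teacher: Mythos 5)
Your proposal is correct and follows essentially the same route as the paper: both reduce to bounding LP via \Cref{prop:LPUB}, pass to the same LP dual, plug in $\hat\gamma_{a1}$ as the dual variable on the inventory constraints (your $\mu_j=\max_{a\in A_j}(r_j-\hat\gamma_{a1})^+$ is exactly the paper's elimination of $\sigma_j$ at that point), and close using the same nonnegativity/monotonicity of $\hat\gamma$ together with the telescoping identity from \Cref{alg:one}. Your explicit closed form $H_1(\textbf{c})=\sum_{t=1}^T\sum_{j\in P}\lambda_j\max_{a\in A_j}(r_j-\hat\gamma_{a,t+1})^+$ is just the paper's \Cref{eq:gammareform} summed over $t$, so the two arguments differ only in bookkeeping, not substance.
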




\begin{proof}

We first note that from \Cref{alg:one}, there are two observations:
\begin{equation}
\hat{\gamma}_{at} - \hat{\gamma}_{a, t+1} = \sum_{j\in P}\frac{\lambda_{jt}}{c_a} \mathbf{1}(a = a^*_{jt})(r_j-\hat{\gamma}_{a, t+1})^+ \ge 0 \quad  \Rightarrow \quad  \hat{\gamma}_{a1} \ge \hat{\gamma}_{a2}\ge \dots \ge \hat{\gamma}_{a,T+1}=0, \label{eq:gammanonneg}
\end{equation}
and
\begin{equation}
    \sum_{i\in B} \left(\hat{\gamma}_{at} - \hat{\gamma}_{a, t+1}\right) = \sum_{j \in P} \frac{\lambda_j}{c_a} \max_{a\in A_j} (1-\gamma_{a, t+1})^+. \label{eq:gammareform}
\end{equation}

Consider the dual of (LP) as follows.
\begin{subequations}
    \begin{align}
        \text{LP-Dual} = \min_{\mathbf{\mu, \sigma}}\quad &\sum_{a\in A}c_a \mu_a + T\sum_{j\in P}\sigma_j \lambda_{j}\\
        \text{s.t.}\quad  &\mu_a + \sigma_j \ge r_j, & \forall j\in P, a\in A_j \label{constr:musigma}\\
        &\mu_a,\sigma_j \ge 0, &\forall a\in A, j\in P.
    \end{align}
\end{subequations}

Note that constraint \ref{constr:musigma} together with non-negativity is equivalent to 
$$
\sigma_j \ge \max_{i\in A_j} (r_j-\mu_i)^+, \quad \forall j\in P.
$$

Substituting this into the problem, we have the following equivalent form of (LP-Dual):
\begin{equation}
    \text{LP-Dual} = \min_{\mu\ge 0} \ \left[ \sum_{a\in A} c_a \mu_a + T\sum_{j\in P}\lambda_{j}\max_{a\in A_j}(r_j-\mu_a)^+ \right]
\end{equation}

Based on \Cref{eq:gammanonneg}, $\mu_a = \hat{\gamma}_{a1}\ge 0, \forall a\in A$ is a feasible solution to (LP-Dual), and since (LP-Dual) is a minimization problem, substituting this set of feasible solution and using weak duality we get
\begin{subequations}
    \begin{align}
        \text{LP} &\le \sum_{a\in A} c_a \hat{\gamma}_{a1} + T \sum_{j\in P} \lambda_j \max_{a\in A_j}(r_j-\hat{\gamma}_{a1})^+ \label{eq:lem1a}\\
        & \le \sum_{a\in A} c_a \hat{\gamma}_{a1} + \sum_{t \in [T]} \sum_{j\in P} \lambda_j \max_{a\in A_j}(r_j-\hat{\gamma}_{at})^+ \label{eq:lem1b}\\
        &= \sum_{a\in A} c_a \hat{\gamma}_{a1} + \sum_{t\in [T]} \sum_{a\in A} c_a (\hat{\gamma}_{at}- \hat{\gamma}_{a,t+1})\label{eq:lem1c}\\
        & = 2 \sum_{a\in A} c_a \hat{\gamma}_{a1} = 2H_1(\textbf{c}),\label{eq:lem1d}
    \end{align}
\end{subequations}
where from \Cref{eq:lem1a} to \Cref{eq:lem1b} we use \Cref{eq:gammanonneg}, and use \Cref{eq:gammareform} to get to \Cref{eq:lem1c}. The final step follows from $\sum_{t\in [T]}(\hat{\gamma}_{at}- \hat{\gamma}_{a,t+1}) = \hat{\gamma}_{a1} - \hat{\gamma}_{a,T+1} = \hat{\gamma}_{a1}$.

Since by \Cref{prop:LPUB}, $\text{LP}\ge V_1(\textbf{c})$, we get $2H_1(\textbf{c})\ge LP \ge V_1(\textbf{c})$.
    
\end{proof}

We are now ready to show the main result.

\mainthm*

\begin{proof}
    We show $R_1(\textbf{c})\ge H_1(\textbf{c})$ through induction. Since $R_{T+1}(\textbf{x}) = H_{T+1}(\textbf{x}) = 0, \forall \textbf{x}$, we aim to show that:
    $$
    R_{t+1}(\textbf{x}) \ge H_{t+1}(\textbf{x}), \forall \textbf{x}, \quad  \Rightarrow  \quad J_{t}(\textbf{x}) \ge H_{t}(\textbf{x}), \forall \textbf{x}.
    $$

    To see this, we have the following:
    \begin{subequations}
        \begin{align}
            R_{t}(\textbf{x}) & = \sum_{j\in P} \lambda_j \sum_{a\in A_j} U_{ja}^t(\textbf{x})(r_j + R_{t+1}(\textbf{x} - \textbf{e}_a)) + \left(1-\sum_{j\in P}\lambda_j \sum_{a\in A_j}U_{ja}^t(\textbf{x})\right)R_{t+1}(\textbf{x})\\
            &\ge \sum_{j\in P} \lambda_j \sum_{a\in A_j} U_{ja}^t(\textbf{x})(r_j + H_{t+1}(\textbf{x} - \textbf{e}_a)) + \left(1-\sum_{j\in P}\lambda_j \sum_{a\in A_j}U_{ja}^t(\textbf{x})\right)H_{t+1}(\textbf{x})\label{thmb}\\
            & = H_{t+1}(\textbf{x}) + \sum_{j\in P} \lambda_j \sum_{a\in A_j} U_{ja}^t(\textbf{x})\left(r_j + H_{t+1}(\textbf{x}-\textbf{e}_a) - H_{t+1}(\textbf{x})\right)\label{thmc}\\
            & = H_{t+1}(\textbf{x}) + \sum_{j\in P} \lambda_j \mathbf{1}\left(\sum_{a\in A_j}x_a>0\right)\left(r_j +  \max_{a\in A_j: x_a>0}H_{t+1}(\textbf{x}-\textbf{e}_a) - H_{t+1}(\textbf{x})\right)^+\label{thmd}\\
            & \ge H_{t+1}(\textbf{x}) + \sum_{j\in P} \lambda_j \mathbf{1}\left(x_{a^*_{jt}}>0\right)\left(r_j +  H_{t+1}(\textbf{x}-\textbf{e}_{a^*_{jt}}) - H_{t+1}(\textbf{x})\right)^+\label{thme}\\
            & = H_{t+1}(\textbf{x}) + \sum_{j\in P} \lambda_j \mathbf{1}\left(x_{a^*_{jt}}>0\right)\left(r_j -\hat{\gamma}_{a^*_{jt}, t+1}\right)^+\label{thmf}\\
            & \ge H_{t+1}(\textbf{x}) + \sum_{j\in P} \lambda_j \frac{x_{a_{jt}^*}}{c_{a_{jt}^*}}\left(r_j -\hat{\gamma}_{a^*_{jt}, t+1}\right)^+\label{thmg}\\
            & = \sum_{a\in A} \hat{\gamma}_{a, t+1} x_a + \left(\hat{\gamma}_{a_{jt}^*, t} - \hat{\gamma}_{a_{jt}^*, t+1}\right)x_{a_{jt}^*}\label{thmh}\\
            & = \sum_{a\in A} \hat{\gamma}_{a, t+1} x_a + \sum_{a\in A}\left(\hat{\gamma}_{a, t} - \hat{\gamma}_{a, t+1}\right)x_{a}\label{thmi}\\
            & = \sum_{a\in A} \hat{\gamma}_{a, t} x_a\label{thmj}\\
            & = H_t(\textbf{x}). \label{thmk}
        \end{align}
    \end{subequations}

\Cref{thmb}: From induction hypothesis $R_{t+1}(\textbf{x}) \ge H_{t+1}(\textbf{x})$, and that the coefficients are positive.

\Cref{thmc}: Rearranging terms.

\Cref{thmd}: By definition of $U_{ji}^t(\textbf{x})$.

\Cref{thme}: Since $a^*_{jt} \in A_j$, we have $\mathbf{1}\left(\sum_{i\in A_j}x_i>0\right) \ge \mathbf{1}\left(x_{a^*_{jt}}>0\right)$, and $\max_{i\in A_j: x_i>0}H_{t+1}(\textbf{x}-\textbf{e}_i)\ge H_{t+1}(\textbf{x}-\textbf{e}_{a^*_{jt}})$.

\Cref{thmf}: By definition of the approximate value function $H_t(\textbf{x})$.

\Cref{thmg}: $\mathbf{1}\left(x_{a^*_{jt}}>0\right) \ge \frac{x_{a_{jt}^*}}{c_{a_{jt}^*}}$.

\Cref{thmh}: By definition of $H_t(\textbf{x})$ and the update rule of $\hat{\gamma}$.

\Cref{thmi}: By the update rule, only $a^*_{jt}$ has its corresponding $\hat{\gamma}$ updated.

\Cref{thmj} and \ref{thmk}: by rearranging terms and definition of $H_t(\textbf{x})$.

Thus we have shown by induction $J_1(\textbf{c})\ge H_1(\textbf{c})$, and by \Cref{lem:1}, we get the desired result.
    
\end{proof}

\section{Supplementary information on empirical application}
\label{app:backgroun_application}

In this section, we provide more information on the inputs to our empirical application and discuss some potential biases.

The data used by our simulator consists of patron arrival probabilities and starting inventory levels. Patron arrival probabilities are calculated using latent demand sizes at each branch shown in \Cref{fig:latentdemandsize}, hold usage fractions shown in \Cref{fig:1b}, and latent desirability of book titles shown in \Cref{fig:latentdes}. The number of book titles and total copies of book titles available at each branch at the start of the simulation are shown in \Cref{fig:inventory}.

We consider a total of 3,809 book titles, with a total of 131,476 copies across 84 branches. These 3,809 book titles are the ones that are available at 20 or more branches. We conduct this filtering for two reasons. First, if a book title is only available at a few branches, including it in our simulation risks biasing our results with outlying usage patterns: e.g., a particular dictionary may be only available at a few larger branches, and are often transported around by requests. Such usage does not contribute significantly to and could introduce bias to the overall browser reserve and usage metrics we are primarily interested. Second, including these book titles efficiently captures overall usage with relatively low computational needs: as we point out in the main text, these titles represent 1.27\% of all book titles but 31.01\% of usage.

As bound by data privacy, we are unable to provide the exact data inputs to the simulation. Nevertheless, we provide the simulator itself and the Bayesian optimization procedure, implemented in Python, as code submission.

\begin{figure}[tbh]
    \centering
    
    \begin{subfigure}[b]{0.45\textwidth}
        \centering
        \includegraphics[width=\textwidth]{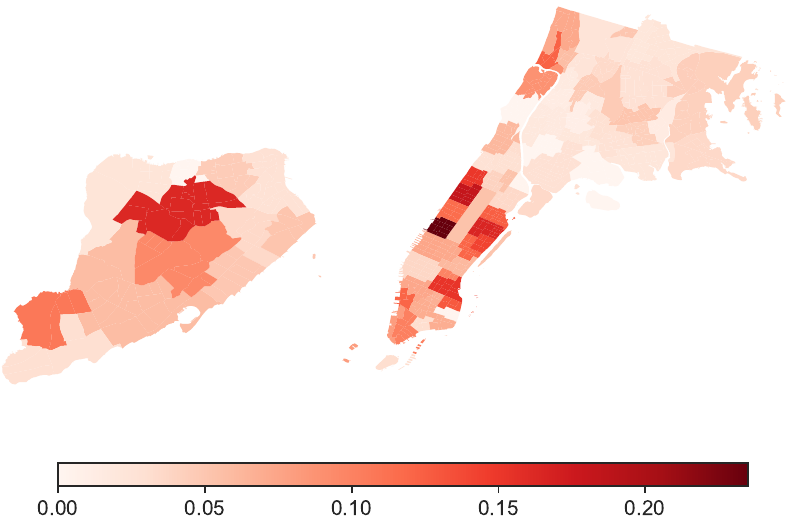}
        \caption{Latent patron demand sizes of each branch}
        \label{fig:latentdemandsize}
    \end{subfigure}
    \begin{subfigure}[b]{0.45\textwidth}
        \centering
        \includegraphics[width=\textwidth]{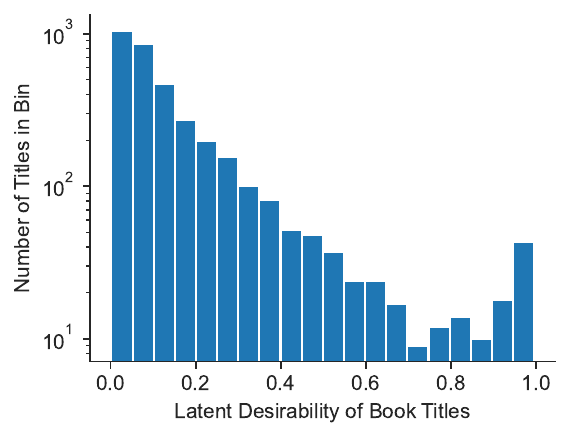}
        \caption{Histogram of latent desirability of book titles}
        \label{fig:latentdes}
    \end{subfigure}
    \hfill
    \caption{Latent demand sizes of each branch and latent desirability of each book title are used in conjunction with hold usage fraction of each branch (\Cref{fig:1b}) to calculate arrival probability of each type of patron. The calculation is presented in \Cref{eq:browserarrival}, and further adjustments are made for better calibration.}
    \label{fig:demandsizeanddes}
\end{figure}

\begin{figure}[tbh]
    \centering
    
    \begin{subfigure}[b]{0.45\textwidth}
        \centering
        \includegraphics[width=\textwidth]{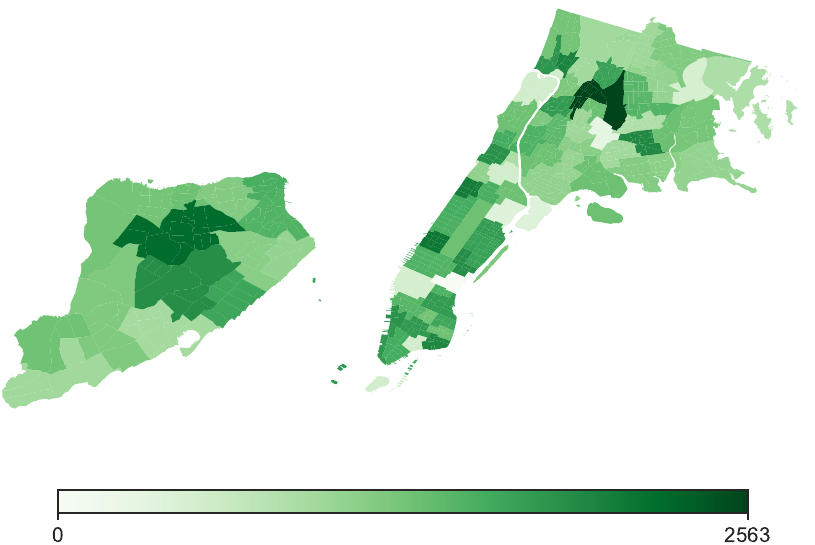}
        \caption{Number of available book titles}
        \label{fig:titles}
    \end{subfigure}
    \begin{subfigure}[b]{0.45\textwidth}
        \centering
        \includegraphics[width=\textwidth]{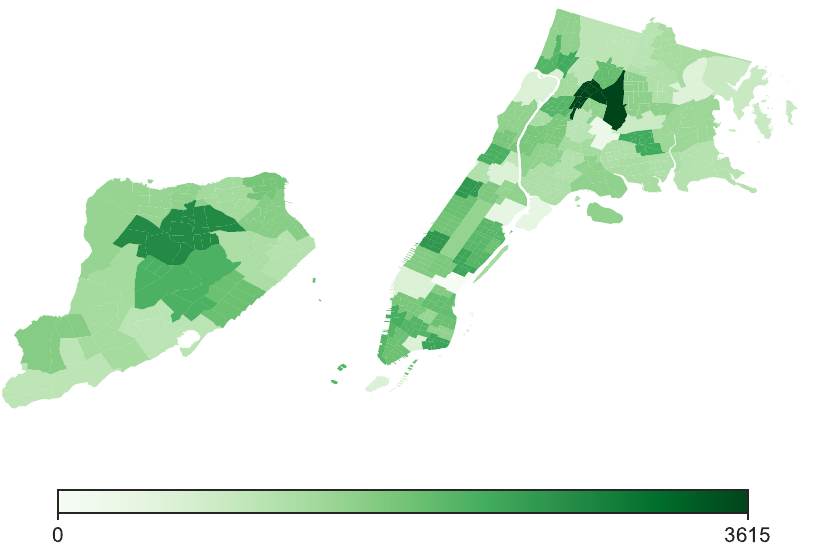}
        \caption{Number of total available copies}
        \label{fig:copies}
    \end{subfigure}
    \hfill
    \caption{Number of book titles and total copies available at each branch at the start of the simulation. We consider a total of 3,809 book titles, with a total of 131,476 copies across 84 branches.}
    \label{fig:inventory}
\end{figure}

\section{Supplementary results}
\label{app:sup}

\subsection{Additional information on policies presented in the main text}

In this subsection, we present some additional information on policies presented in the main text. \Cref{fig:implications_app} shows usage and browser experience ratios associated with three scenarios: historical data, balanced policy, and another policy that prioritizes browser experience. The first two scenarios are the same with \Cref{fig:implications}. We observe that the balanced policy fits in between these two relatively extreme cases. \Cref{fig:policy_params} presents the exact parameters of the balanced policy, consisting of a set of branch reserve fractions and priority tiers for each branch. \Cref{fig:netinflowchangeapp} provides additional information to \Cref{fig:netinflowchangeonebranch} and \Cref{fig:1c}: the balanced policy generally smooths net inflow for branches, with higher-income branches experiencing a decrease and lower-income branches experiencing an increase. 

\begin{figure}[tbh]
    \centering
    
    \includegraphics[width=\textwidth]{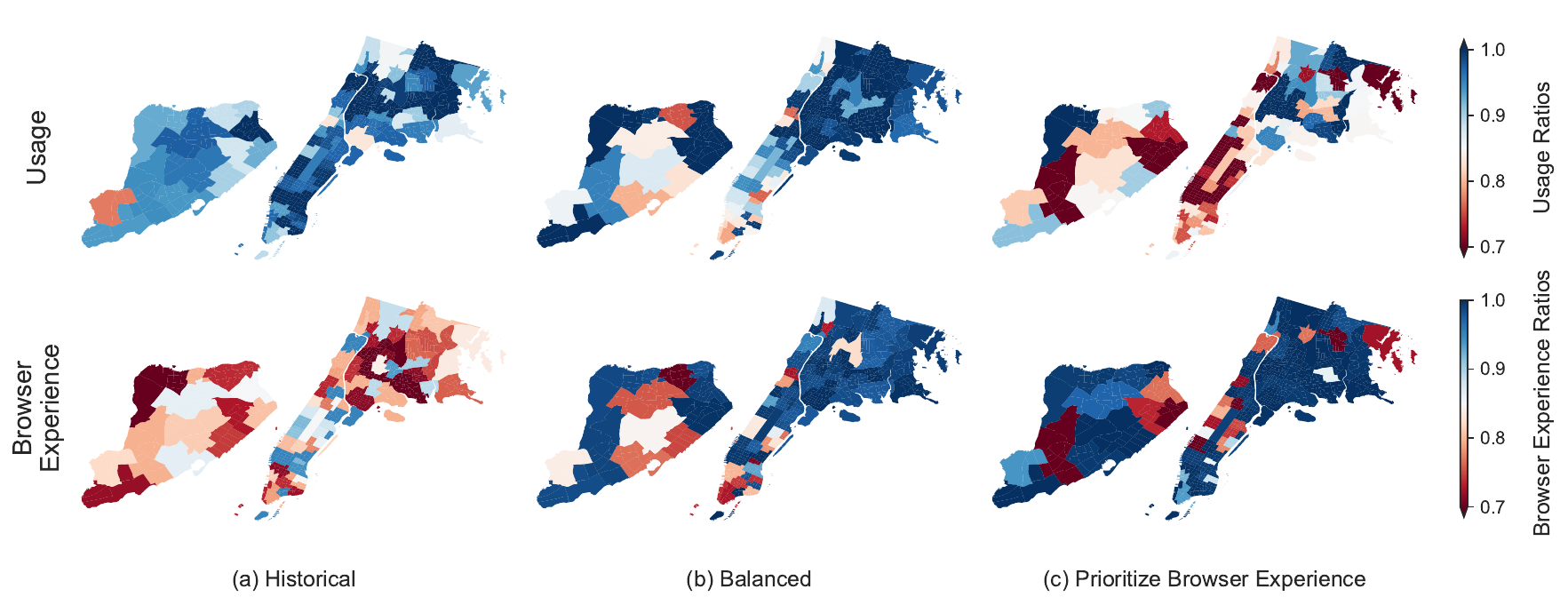}
    \caption{In comparison to \Cref{fig:implications_app}, we additionally include a policy that prioritizes browser experience even more, at a higher loss of usage. The balanced policy generally fits between historical data and a more extreme prioritization of browser experience, achieving a balance of the two objectives.}
    \label{fig:implications_app}
\end{figure}

\begin{figure}[tbh]
    \centering
    
    \begin{subfigure}[b]{0.47\textwidth}
        \centering
        \includegraphics[width=\textwidth]{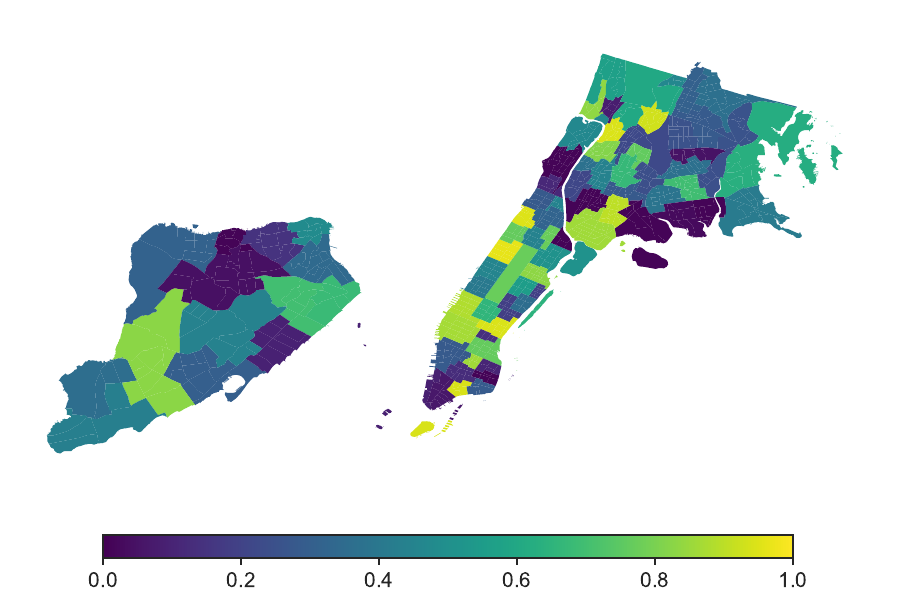}
        \caption{Branch reserve fractions}
        \label{fig:browser_reserve_map}
    \end{subfigure}
    \hfill
    \begin{subfigure}[b]{0.47\textwidth}
        \centering
        \includegraphics[width=\textwidth]{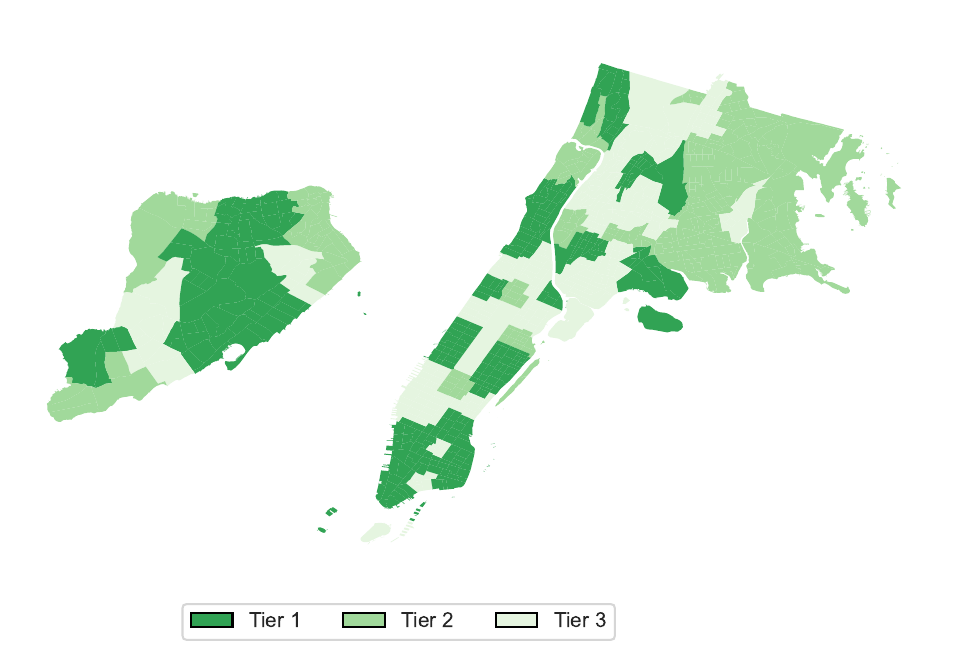}
        \caption{Priority tiers}
        \label{fig:implementable_fulfillment}
    \end{subfigure}

    \caption{Policy parameters of the balanced implementable policy. In general, higher branch reserves and lower priority tiers are associated with higher browser experience, whereas usage is a more complex function of the parameters.}
    \label{fig:policy_params}
\end{figure}

    

\begin{figure}[tbh]
    \centering
    \begin{subfigure}[b]{0.48\textwidth}
        \centering
        \includegraphics[width=\textwidth]{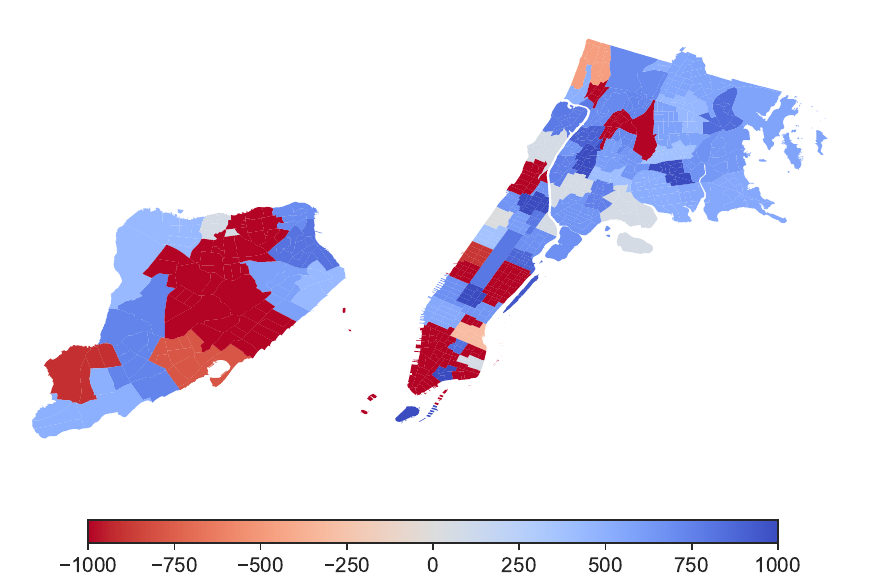}
        \caption{Change of net desirability inflow}
        \label{fig:collectionqualityratiochange}
    \end{subfigure}
    \begin{subfigure}[b]{0.48\textwidth}
        \centering
        \includegraphics[width=\textwidth]{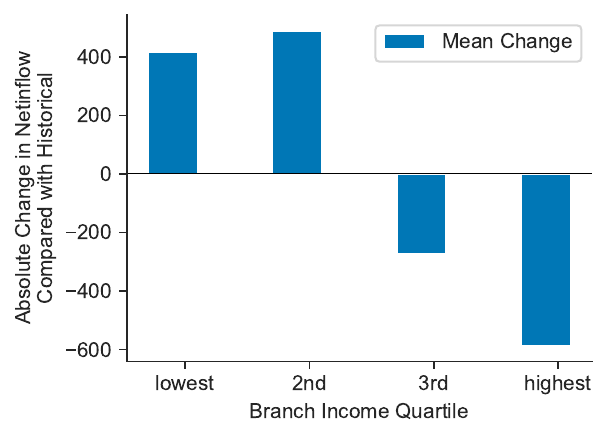}
        \caption{Absolute change for each income quartile}
        \label{fig:netinflowchange}
    \end{subfigure}
    \hfill
    \caption{Absolute change net desirability inflow, under the balanced implementable policy. Comparing (a) with \Cref{fig:1c}, the balanced policy alleviates disparities in net inflow. This is done by predominantly increasing inflow to lower-income branches, and decreasing inflow to higher income branches, as shown in (b).}
    \label{fig:netinflowchangeapp}
\end{figure}

\subsection{Empirical analyses with alternative browser experience objective}
\label{sec:nash_app}

\Cref{fig:pareto_nash_app} presents the Pareto frontier found when the browser experience objective is defined as the Nash welfare of all branches. Concretely:
$$
g^{\text{Nash}}(\vec{\beta}, \mathcal{F}) = \left(\prod_{i\in B} \frac{\CQ_i(\vec{\beta}, \mathcal{F})}{\CQ_i(\textbf{1}, \mathcal{F}^*)}\right)^{\frac{1}{n}},
$$
where $n$ is the number of branches, this definition aligns more with popular definitions of fairness and equity in multi-player settings. Our main insights remain the same under such a setting: though there is a substantial trade-off between usage and browser experience, there exists a large array of implementable policies that Pareto dominates historical practice.

\begin{figure}
    \centering
    \includegraphics[width=0.53\linewidth]{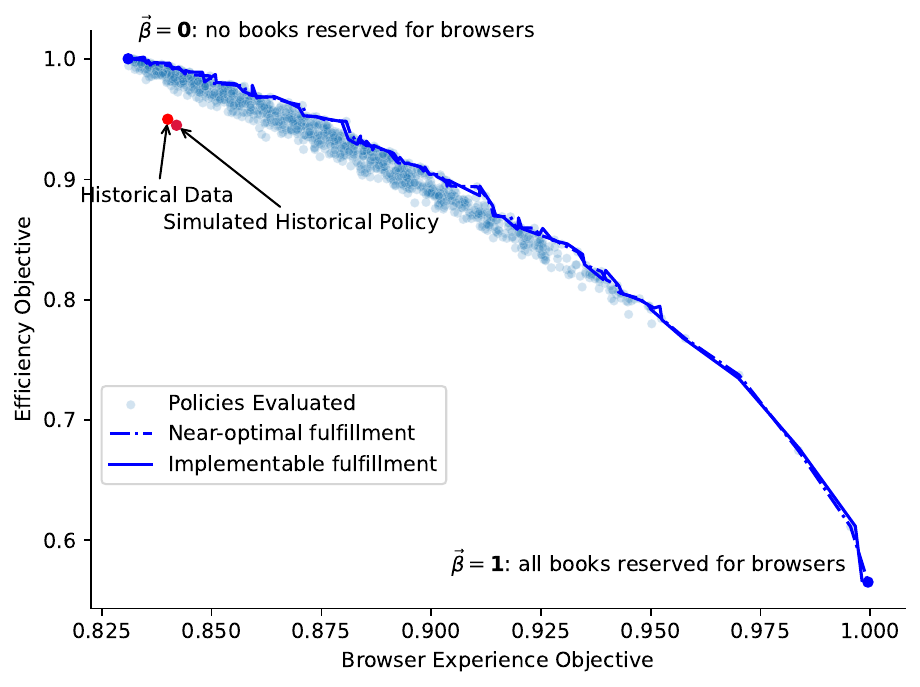}
    \caption{Pareto frontier found when browser experience is defined as the Nash welfare function of individual branches.}
    \label{fig:pareto_nash_app}
\end{figure}

\subsection{Relative contribution of fulfillment and inventory reserves}

A natural question arises from our analyses: what is the relative contribution of each policy lever? In \Cref{fig:relative_impact}, we present some evidence to answer this question. By adding near-optimal fulfillment to historical inventory reserve (i.e., no inventory reserve), we increase the usage objective by 0.050, and decrease the browser experience objective by 0.013. By subtracting browser reserve from the balanced implementable policy, we increase the usage objective by 0.033, and decrease the browser experience objective by 0.029. In another word, the two levers work in opposite directions, but when combined, they can get improvements in both objectives. One other comparison worth noting is between historical data and the balanced policy with no reserve: the only difference between these two settings is the fulfillment policy. We see that the latter Pareto dominates the former, suggesting historical fulfillment is sub-optimal in both objectives and that selecting better fulfillment, even without browser reserves, helps both objectives.

\begin{figure}[tbh]
    \centering
    \includegraphics[width=0.6\linewidth]{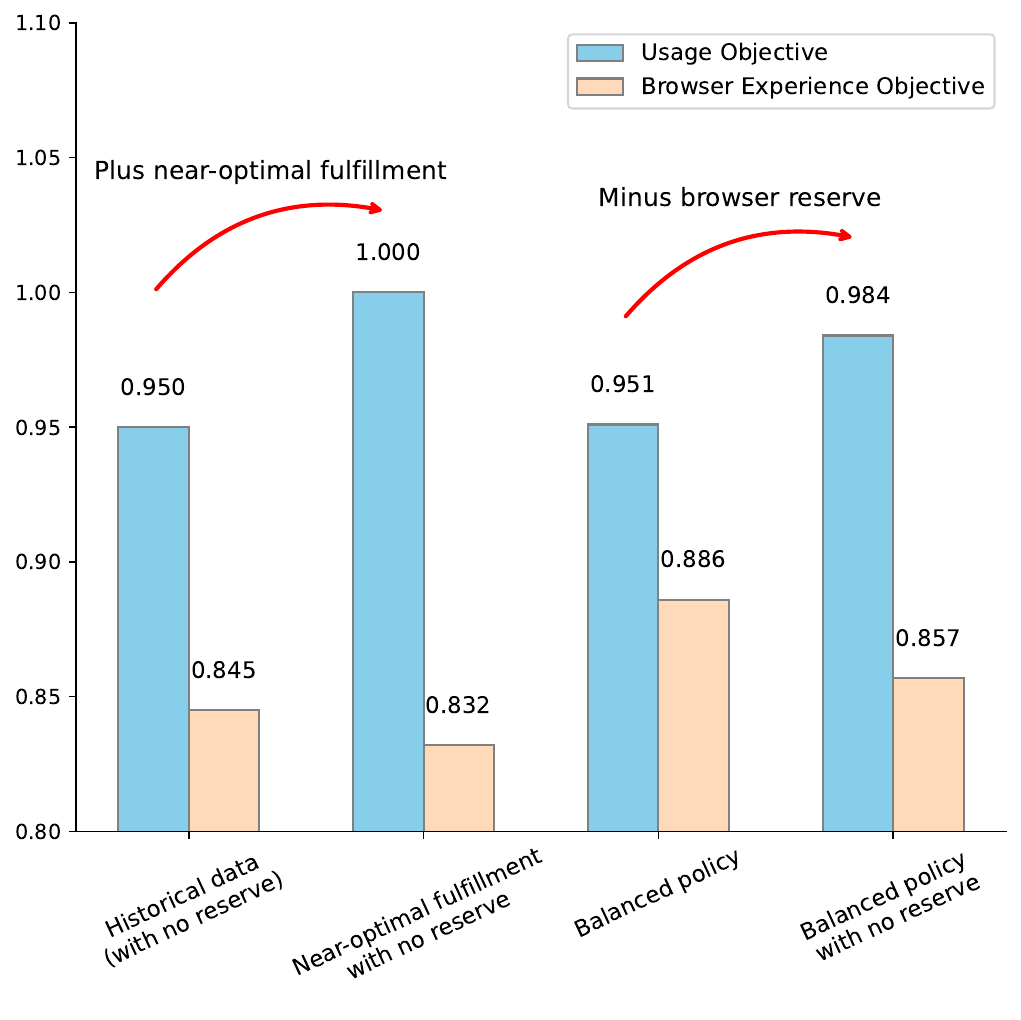}
    \caption{Comparison of usage and browser experience objectives for four policy scenarios. Historical data: calculated from historical data, note that historically there were (almost) no reserves for browsers (except for a few books at a select few branches, on special days). Near-optimal fulfillment with no reserve: using no browser reserve (same as historical), and the near-usage-optimal fulfillment method. Balanced policy: the policy presented in the main text. Balanced policy with no reserve: we take the implementable fulfillment of the balanced policy (i.e., \Cref{fig:implementable_fulfillment}), and set the browser reserves to 0. The first two settings represent the impact of near-optimal fulfillment; the last two settings represent the impact of browser reserves on the balanced policy. Furthermore, between historical data (first setting) and the balanced policy with no reserve (fourth setting), the only difference between them is the fulfillment policy. Selecting better fulfillment, even without browser reserves, helps both objectives.}
    \label{fig:relative_impact}
\end{figure}

\end{document}